\documentclass[a4paper,UKenglish,cleveref, autoref,thm-restate]{eptcs}
\usepackage{mathtools,mathpartir,amsmath,amssymb}
\usepackage[dvipsnames]{xcolor}
\usepackage{hyperref}
\usepackage{ccaption}
\usepackage{keyval}
\usepackage{prooftree}
\usepackage{soul}
\usepackage[all]{xy}
\usepackage{dsfont}
\usepackage{tikz}                   % Graphs
\usetikzlibrary{shadows,arrows,shapes,automata,positioning,decorations.markings,fit}
\usepgflibrary{arrows.spaced}
\usepackage{wrapfig}
\usepackage{breakurl}
\usepackage{stmaryrd}
\usepackage[misc,geometry]{ifsym}
\SetSymbolFont{stmry}{bold}{U}{stmry}{m}{n}
\usepackage{enumerate} 
\usepackage[T1]{fontenc}
\usepackage{orcidlink}

\title{Sensible Intersection Type Theories 
} 

\author{
Mariangiola Dezani-Ciancaglini 
\orcidlink{0000-0002-3341-0941}
\institute{
Dipartimento di Informatica,
Universit\`a di Torino, Italy}
\email{dezani@di.unito.it}
\and Besik Dundua 
\orcidlink{0000-0003-4754-4163}
\institute{Kutaisi International University, Kutaisi, Georgia\\
VIAM, Tbilisi State University, Tbilisi, Georgia}
\email{ Besik.Dundua@kiu.edu.ge}
\and
Paola Giannini\footnote{This work has the financial support of the Universit\`a  del Piemonte Orientale.}
\orcidlink{0000-0003-2239-9529}
\institute{DiSSTE, Universit\`{a} del Piemonte Orientale,  Italy}
\email{paola.giannini@uniupo.it}
\and
Furio Honsell
\orcidlink{0000-0001-8937-1892}
\institute{DSMIF, Universit\`a di Udine, Italy}
\email{furio.honsell@uniud.it}
}

\def\authorrunning{Dezani-Ciancaglini, Dundua, Giannini, Honsell}
\def\titlerunning{Sensible Intersection Type Theories}

\def \bpc {\begin{color}{blue}Paola:\ } 
\def \epc {\end{color}}

\newcommand{\CA}[1]{{\mathcal C}(#1)}
\newcommand{\cl}[2]{\gamma(#1,#2)}
\newcommand{\cls}[2]{[#1]^#2}
\newcommand{\leqc}{\preceq_{\Eqt}}

\newcommand{\B}{{\mathcal B}}

\newcommand{\gitt}{\Theta}
\newcommand{\gleq}{\sqsubseteq}
\newcommand{\gcap}{\sqcap}
\newcommand{\gto}{\rightsquigarrow}
\newcommand{\agi}{\alpha}
\newcommand{\bgi}{\beta}
\newcommand{\gder}[3]{#1\vdash_\gitt#2:#3}

\newcommand{\gb}{\Upsilon}
\newcommand{\gm}{\kappa}
\newcommand{\sder}[3]{#1\vdash_{\SL}#2:#3}
\newcommand{\SL}{\mathcal S\!\!\Lambda}

\usepackage{calrsfs}
\DeclareMathAlphabet{\pazocal}{OMS}{zplm}{m}{n}

\newtheorem{definition}{Definition}[section]
\newtheorem{lemma}[definition]{Lemma}
\newtheorem{proposition}[definition]{Proposition}
\newtheorem{theorem}[definition]{Theorem}

\newtheorem{example}[definition]{Example}

\newenvironment{proof}{{\em Proof.}}{\hfill$\square$\vspace{5pt}}
\newenvironment{proofs}{{\em Proof.}}{}

\newcommand{\refToEx}[1]{Example~\ref{#1}}

\newcommand{\Pair}[2]{\langle#1,#2\rangle}
\newcommand{\Triple}[3]{\langle#1,#2,#3\rangle}
\newcommand{\set}[1]{\{#1\}}

\newcommand{\Cline}[1] {$$#1$$} %{~\\[0.5mm]\centerline{$ #1 $}~\\[0.5mm]} %{\centerline{$#1$}\vspace{2pt}}

\newcommand{\la}{\lambda}     
\newcommand{\La}{\mathrm\Lambda}

 \newcommand{\U}{{\sf U}}
 \newcommand{\type}{A}%{\tau}
 \newcommand{\types}{B}%{\sigma}
 \newcommand{\typeC}{C}
  \newcommand{\typeD}{D}
   
\newcommand{\vartype}{{\sf c}}%{\phi}

\newcommand{\cSet}{\mathbb{A}}
\newcommand{\IT}{\mathbb{T}}
\newcommand{\ITT}{\pazocal{T}}
\newcommand{\der}[3]{#1\vdash_\ITT#2:#3}

\newcommand{\deri}[4]{#1\vdash_{#2}#3:#4}
 \newcommand{\rn}[1]{\textnormal{#1}}  
\newcommand{\M}{M}
\newcommand{\N}{N}
 \newcommand{\lm}[3]{\langle#1,#2,#3\rangle}
 \newcommand{\x}{x}
 \newcommand{\y}{y}
\newcommand{\dd}{{\sf d}}
 \newcommand{\subs}[3]{#1[#2:=#3]}
  
\newcommand{\D}{{\mathcal D}}
\newcommand{\flt}{\mathit{F}}  % a filter
\newcommand{\fltg}{\mathit{G}}  % a filter
\newcommand{\Flt}{\pazocal{F}} % filter model
 \newcommand{\env}{\rho} 
  \newcommand{\Env}[1]{\mathbb{E}_{#1}} 
 \newcommand{\isWFEnv}[2]{#1\models#2} 
   
\newcommand{\semD}[3]{\llbracket#1\rrbracket^{#2}_{#3}}
 \newcommand{\appF}{\cdot} 
 
\newcommand{\bred}{\to_\beta}
\newcommand{\bredstar}{\to^\ast_\beta}
\newcommand{\hred}{\to_h}
\newcommand{\hredstar}{\to^\ast_h}

\newcommand{\ie}{{\em i.e.}}

\newcommand{\So}{\mathcal S}
\newcommand{\Bo}{\mathcal B}
\renewcommand{\Bo}{\pazocal B}
\newcommand{\aenv}{\zeta_\cSet}

\newcommand{\tint}[2]{[#1]_{#2}}
\newcommand{\tseq}[2]{\langle#1\mid #2\rangle}
\newcommand{\tseqs}[1]{\langle#1\rangle}
\newcommand{\op}[1]{{\mathcal O}_{#1}}
\newcommand{\vs}[1]{\mathbb{#1}}
\newcommand{\SAT}{\pazocal{S\!A\!T}}
\newcommand{\Opp}[1]{#1^{\it op}}

\newcommand{\pI}{I}%{\pazocal{I}}
\newcommand{\pJ}{J}%{\pazocal{J}}
\newcommand{\leqS}{~^\subseteq_\supseteq}
\newcommand{\transl}[1]{#1^\star}%{[\![#1]\!]}

\newcommand{\Sat}{\pazocal{X}}

\newcommand{\pol}{p}
\newcommand{\piu}{{\tt +}}
\newcommand{\meno}{{\tt -}}
\newcommand{\pom}{{\tt \pm}}
\newcommand{\Pol}[2]{{#1}^{#2}}
\newcommand{\PN}[1]{\Pol#1{\pom}}
\newcommand{\Pos}[1]{\Pol#1{\piu}}
\newcommand{\Neg}[1]{\Pol#1{\meno}}

\newcommand{\OkPos}{\mathbf{Pos}}
\newcommand{\OkNeg}{\mathbf{Neg}}

\newcommand{\Eqt}{\mathcal{A}}

\newcommand{\fp}[1]{\mathbf{#1}}

\begin{document}

\maketitle

\begin{abstract} Finitary/static semantics in
 the form of intersection type assignments have become a paradigm for analysing the fine structure of all sorts of $\lambda$-models. The key step is the construction of a filter model isomorphic to a given $\lambda$-model. A property of great interest of filter $\lambda$-models is {\em sensibility}, \ie\ the interpretation of all unsolvable terms is the least element.  
The flexibility of intersection type assignments derives from their parametrisation on intersection type theories. 
We construe intersection type theories as special meet-semilattices and show that appropriate morphisms, in the opposite category of meet-semilattices, preserve sensibility of the induced $\lambda$-models. Interestingly the set of saturated sets together with the set of $\lambda$-terms is such a meet-semilattice, thus showing that arguments based on Tait-Girards's computability amount to the construction of a morphism. We characterise two classes of intersection type theories which induce sensible filter models. The first is non-effective while the second is effective and it amounts to the generalisation of Mendler's criterion to intersection types and head normalising terms. The complete characterisation of sensible filter models however still escapes. 
\end{abstract}
{\bf Keywords}: $\la$-calculus, Intersection Types, Filter Models.
\bigskip
{\begin{flushright} \small a Stefano Berardi {\em il miglior fabbro} \end{flushright}}

\section{Introduction}
% !TEX root = dgh.tex

Dedicating a paper to a distinguished colleague is already quite a demanding task, but this task becomes even harder if our colleague has spread his remarkable talent across the two fields of Logic and Theoretical Computer Science. The  inspiration for the topic of the present paper came from the recollection of Stefano Berardi as a young PhD student in Torino when, in the Stone Age  of Logical Frameworks and the early days of the Types Community, he addressed the problem of formal machine checking the proof given by J.-Y. Girard's of the strong normalisation of second order $\lambda$-calculus~\cite{G71}, using  one of the first releases of Coq~\cite{CH88}. Stefano went well beyond that and since then made momentous contributions to the area of extracting constructive contents from classical and impredicative proofs~\cite{BB95,B06,BS17,BT19,BBR25}.

The present paper addresses a closely related problem, namely that of {\em head %weak 
normalisation} for {\em intersection type theories}.  Intersection type theories~\cite{CD80} were invented in Torino by the first author, together with Mario Coppo, in the late '70's of the last century of the previous millennium. As already noticed in~\cite{CDHL84} intersection types are information systems in the sense of~\cite{S82}. Since then, intersection types have been widely generalised and utilised for providing useful  characterisations for several classes of $\lambda$-terms, most notably weak head normalising~\cite{DHM05}, head normalising~\cite{CDV81}, normalising~\cite{CDV81} and their persistent versions~\cite{DHM05}, strongly normalising~\cite{P80}, closable~\cite{HR92}, and invertible $\lambda$-terms~\cite{T08,V19}. The flexibility of intersection types lies in their correspondence with {\em clopen sets} in Scott's topological models of $\lambda$-calculus, which can thus be understood as models whose points are, in fact, filters of properties of programs. Intersection type theories therefore permit to express the dynamics of programs as filters of their static properties~\cite{BCD83,CDHL84}. This correspondence has been nicely expressed categorically as a duality in~\cite{A91}.  Since their introduction, intersection types have become a paradigm for expressing statically all sorts of  execution properties of programming languages~\cite[Part III]{BDS13}. 

Intersection type theories being so flexible, which in fact is the very reason which makes them successful, are far from having a complete theory. More specifically, in the present paper we address the problem of characterising {\em sensible} intersection type theories, namely type theories which generate {\em sensible} filter models, \ie\ models which assign only the trivial intersection type to an unsolvable term. To this end we construe intersection type theories as meet-semilattices, enriched with an arrow constructor,  and show that appropriate morphisms in the opposite category of meet-semilattices preserve sensibility of the induced $\lambda$-models. This  permits us to  transfer profusely sensibility results between filter models, thus providing alternatives to the existing  proofs of sensibility for many models~\cite{DHM05}. 
The  very set of saturated sets, together with the set of $\lambda$-terms, being such a meet-semilattice, permits us to reduce to the existence of a morphism all arguments based on Tait-Girards's computability, as was the one formalised by Berardi some thirty-five years ago now. This is in effect a generalisation of Girard's reducibility candidates.  We characterise two classes of sensible intersection type theories. The first is non-effective and it applies to 
%the special class of $\rightarrow$-sound intersection type theories. 
a special class of intersection type theories satisfying a technical condition known as $\to$-soundness.
The second is effective and it amounts to the generalisation of Mendler's criterion~\cite{M91}, originally given for recursive second order $\lambda$-calculus and strong normalisation, to intersection types and head normalisation. The complete characterisation of sensible filter models however still escapes. 

The present paper is a follow up of~\cite{DGH25}, where the complementary problem of studying non-sensible intersection type theories was addressed. Reading both papers can be beneficial, since the two papers have a number, albeit small, of cross-references.

Finally, we wish Stefano Berardi, the {\em miglior fabbro}\footnote{In the XXVI canto of {\em Purgatory} by Dante Alighieri, Guido Guinizzelli indicates Arnaut Daniel (Occitan troubadour of the 12th century) as ``the best smith of maternal speech'' (``il miglior fabbro del parlar materno'') for his poetic mastery. T. S. Eliot dedicated the final version of {\em The Waste Land} to Ezra Pound, calling him ``il miglior fabbro''.}, an even more fruitful late career in $\lambda$-calculus in the tradition of such luminaries as Curry, Church, B\"ohm, Scott, Martin-L\"of, Venturini-Zilli, Ronchi Della Rocca, Plotkin, and Barendregt, $\ldots$.  But  we also hope that the present paper can stimulate other authors to take up the fascinating task of clarifying further the mysteries of intersection types. Since, ultimately, given that $\lambda$-calculus is a universal model of computation,  these are the mysteries of computation itself.

\subsubsection*{Synopsis} In Sections~\ref{lc} and~\ref{itfm}, we recall basic facts on $\lambda$-calculus and the theory of intersection types and filter models. In Section~\ref{gitts}, we introduce the algebraic framework of meet-semilattices, establish transfer results for sensibility between theories, and provide illustrative examples.
%we introduce the algebraic setting of meet-semilattices and establish the transfer results of sensibility between theories and give examples. 
In Section~\ref{gt}, we  characterise  two classes of intersection type theories which are sensible. Difficulties in providing complete characterisation of sensible intersection type theories appear in Section~\ref{critical}, where we also discuss the $\lambda$-theories of sensible filter models and raise some open questions. Concluding remarks appear in Section~\ref{rwc}.

\section{$\lambda$-calculus}\label{lc}
% !TEX root =dgh.tex

In this section we recall some basic notions and properties of untyped $\la$-calculus following Chapters 2, 3, and 8 of~\cite{B85}. Readers familiar with $\lambda$-calculus 
%these chapters 
can skip this subsection.

We start by defining $\la$-terms and $\beta$-reduction.

\begin{definition}[$\la$-terms~{\cite[Definition 2.1.1]{B85}}]\label{lt}
The set $\La$ of  {\em pure $\la$-terms} is defined by:
 \Cline{M::= x\mid\la  x.  M\mid  M M.}
\end{definition}

\noindent
We write $\lambda$-terms with the usual notational conventions.  In particular we write $\lambda\overrightarrow{x}. M$ as short for 
$\lambda x_1\cdots  x_n. M$ assuming $\overrightarrow{x}= x_1\cdots  x_n$ for $n\in {\mathbb N}$.  Free and bound occurrences of variables are defined in the standard way. In particular we assume Barendregt's convention, \ie\  that different variables have different names~{\cite[Convention 2.1.12]{B85}}.

\begin{definition}[$\beta$-rule and $\beta$-reduction~{\cite[Definitions 2.1.15, 3.1.3 and 3.1.5]{B85}}]\hspace{-3pt}
\begin{enumerate}
\item The {\em $\beta$-rule} replaces 
$(\la x. M)N$ with $ M[x:=N]$, 
where $M[x:=N]$ denotes the $\la$-term obtained by the (capture free) substitution of $x$ by $N$ in $M$.
\item The {\em one step $\beta$-reduction} $\bred$ is defined as the contextual closure of the $\beta$-rule.
\item The {\em $\beta$-reduction} $\bredstar$ is defined as the reflexive and transitive closure of $\bred$. 
\item The {\em $\beta$-convertibility} $=_\beta$ is defined as the equivalence relation generated by $\bredstar$. 
\end{enumerate}
\end{definition}

Crucial to our development are the notions of solvability and unsolvability of $\la$-terms.

\begin{definition}[Solvable and unsolvable $\la$-terms~{\cite[Definition 2.2.10]{B85}}]\hspace{-3pt}
\begin{enumerate}
\item A $\la$-term $M$ is {\em solvable} if there are $n$ $\la$-terms $N_1, \ldots, N_n$ such that
\Cline{(\la\overrightarrow x.M)N_1 \cdots N_n \bredstar  {\bf I},}
where $\overrightarrow x$ are the variables which occur free in $M$ and ${\bf I}=\la x.x$ is the identity combinator.
\item A $\la$-term is {\em unsolvable} if it is not solvable.
\end{enumerate}
\end{definition}

As in~\cite{K98} our study of unsolvable terms is based on the notion of head reduction.

\begin{definition}[Head normal form and head redex~{\cite[Definition 8.3.9]{B85}}]\hspace{-3pt}
\begin{enumerate}
\item  If $M=\la\overrightarrow x.xM_1\cdots M_m$, then $M$ is in {\em head normal form} and 
$x$ is the {\em head variable} of $M$.
\item If $M=\la\overrightarrow x.(\lambda x.N)PM_1\cdots M_m$, then $(\lambda x.N)P$ is the {\em head redex} of $M$.
\end{enumerate}
\end{definition}

Every $\la$-term either is in head normal form or has a head redex.

\begin{proposition}[Shape of $\la$-terms~{\cite[Corollary 8.3.8]{B85}}]
Every $\la$-term is   

\centerline{$\text{either of the form }\la \overrightarrow x.x M_1\cdots M_m \text{ or of the form } 
\la \overrightarrow x.(\lambda x.N)P M_1\cdots M_m$} 

\noindent
where $m\geq 0$.
\end{proposition}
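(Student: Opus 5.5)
The plan is a structural induction on the $\la$-term $M$, following the grammar of Definition~\ref{lt}. To make the induction go through, I will prove a slightly sharper statement: every term has the form $\la\overrightarrow x.H M_1\cdots M_m$ with $m\geq 0$, where the \emph{head} $H$ is either a variable $x$ or a $\la$-abstraction $\la x.N$, and in the latter case $m\geq 1$. In the first case $M$ is in head normal form. In the second case the first argument $M_1$ plays the role of $P$, so $M=\la\overrightarrow x.(\la x.N)PM_2\cdots M_m$, and this is exactly the second shape of the statement with $m-1\geq 0$ trailing arguments.

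The cases are as follows.

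\textbf{Variable.} If $M=x$, take $\overrightarrow x$ empty and $m=0$; then $M$ is of the first form.

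\textbf{Abstraction.} If $M=\la y.M'$, the induction hypothesis gives $M'=\la\overrightarrow x.HM_1\cdots M_m$ in one of the two forms. Prefixing $\la y$ gives $M=\la y\overrightarrow x.HM_1\cdots M_m$, which has the same form, with the binder list extended by $y$.

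\textbf{Application.} If $M=M'Q$, apply the induction hypothesis to $M'$ and split on whether $M'$ starts with a $\la$.
\begin{enumerate}
\item If the list $\overrightarrow x$ for $M'$ is empty, then $M'=HM_1\cdots M_m$, so $M=HM_1\cdots M_mQ$. This keeps the same head and has one more argument, so $M$ is of the same form as $M'$ with $m+1$ arguments.
\item If $\overrightarrow x$ is nonempty, then $M'$ is an abstraction $\la x.N$, so $M=(\la x.N)Q$. This is of the second form with $P=Q$, $m=0$ and no outer binders.
\end{enumerate}

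The only real obstacle is bookkeeping in the application case. A naive induction on the bare statement would have trouble when $M'$ itself begins with abstractions, because appending an argument does not simply extend the argument list. The case split above on whether $\overrightarrow x$ is empty resolves this, since a nonempty binder list means $M'$ is itself an abstraction and so forms a head redex with $Q$.

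An alternative is to read any term left-to-right as $\la\overrightarrow x.M_0M_1\cdots M_m$, taking $\overrightarrow x$ maximal and the application spine maximal, so that $M_0$ is not an application. Then $M_0$ is either a variable or an abstraction. If it is an abstraction, then either $m\geq 1$, which gives a head redex, or $m=0$, which contradicts the maximality of $\overrightarrow x$. I would present this second argument, since it is shorter, and justify the maximal decomposition by the same induction.
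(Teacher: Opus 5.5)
Your proof is correct: both the structural induction with the strengthened invariant (an abstraction in head position always carries at least one argument, which is exactly what lets you read off $P$) and the shorter argument via the maximal decomposition $\lambda\overrightarrow{x}.M_0M_1\cdots M_m$, with $M_0$ a variable or an abstraction, establish the statement. The paper gives no proof of its own and simply imports the fact from \cite[Corollary 8.3.8]{B85}; your maximal-decomposition argument is the standard justification of it.
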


\begin{definition}[Head reduction~{\cite[Definition 8.3.10]{B85}}] We write $M \hred N$ if $N$ is obtained from $M$ by reducing its head redex.
The {\em head reduction} of $M$ is the finite or infinite sequence of terms $M_0$, $\dots$, $M_n$, $\dots$ such that $M=M_0$ and $M_{n} \hred M_{n+1}$  with $n\in \mathbb{N}$. 
\end{definition}

\noindent
We use $\hredstar$ to denote the reflexive and transitive closure of $\hred$.

\medskip

In our development we take advantage of the characterisation of unsolvability by means of head reduction. 
\begin{theorem}[{\cite[Fact 2.2.12]{B85}}]
A $\la$-term $M$ is unsolvable iff its head reduction is infinite.
\end{theorem}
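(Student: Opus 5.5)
We prove both directions by contraposition, i.e.\ we show that $M$ is solvable iff its head reduction terminates, that is, iff $M$ has a head normal form. Since the definition of solvability closes $M$ by $\la\overrightarrow x.$ and head reduction commutes with this abstraction (the head redex of $M$ is the head redex of $\la\overrightarrow x.M$), we may assume throughout that $M$ is closed.

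($\Leftarrow$, finite head reduction implies solvable.) Suppose $M\hredstar \la y_1\cdots y_n.y_i M_1\cdots M_m$ with $M$ closed, so the head variable is some $y_i$. We take $N_1=\cdots=N_n=N$ with $N=\la z_1\cdots z_m.{\bf I}$, the term that discards $m$ arguments and returns ${\bf I}$. Then
\Cline{M N\cdots N \bredstar N M_1'\cdots M_m' \bredstar {\bf I},}
where $M_j'=M_j[\overrightarrow y:=\overrightarrow N]$. Since $\hredstar\subseteq\bredstar$, this gives $M N_1\cdots N_n\bredstar{\bf I}$, so $M$ is solvable.

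($\Rightarrow$, solvable implies finite head reduction.) Suppose $MN_1\cdots N_n\bredstar{\bf I}$. We plan to use two standard facts.
\begin{enumerate}
\item If a term $P$ has a $\beta$-reduct in head normal form, then the head reduction of $P$ terminates. This follows from the standardisation theorem: any reduction $P\bredstar Q$ can be rearranged as head steps followed by internal steps, and internal steps preserve the property of being, or not being, a head normal form together with the shape of the head.
\item If the head reduction of $M$ is infinite, then so is that of $MN$. If $M=\la x.M'$, then $MN\hred M'[x:=N]$, and the infinite head reduction of $M'$ is preserved under substitution. If $M$ is not an abstraction, then the head redex of $M$ is also the head redex of $MN$.
\end{enumerate}
Applying the second fact $n$ times, an infinite head reduction of $M$ yields an infinite head reduction of $MN_1\cdots N_n$. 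This contradicts the first fact, because ${\bf I}$ is in head normal form. Hence the head reduction of $M$ is finite.

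The main obstacle is the first fact, which relies on standardisation (or an equivalent argument via Church--Rosser together with the observation that head reduction is cofinal for reaching head normal forms). We would cite it from~\cite[Chapter 11]{B85} rather than reprove it. A smaller technical point is the substitution case of the second fact. There we would show by induction along the head reduction that if $P\hred P'$ then $P[x:=N]\hred P'[x:=N]$, which holds because substitution does not destroy the head redex.
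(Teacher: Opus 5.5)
The paper gives no proof of this statement. It imports it from~\cite{B85}, where it follows from Wadsworth's theorem (a term is solvable iff it has a head normal form) together with the standardisation-based fact that a term has a head normal form iff its head reduction terminates.

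Your argument is correct and is essentially that textbook proof:
\begin{itemize}
\item the $(\Leftarrow)$ direction with $N=\la z_1\cdots z_m.{\bf I}$ is the standard one;
\item your first fact is the head/internal factorisation lemma;
\item your second fact is the contrapositive of ``if $MN$ has a head normal form then so has $M$''.
\end{itemize}

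One point should be made explicit: the iteration in the second fact when $M$ is not an abstraction. Your one-step remark only gives $MN\hred M_1N$. Since $M_1$ may later become an abstraction, invoking the fact again on $M_1$ is circular as written. Argue instead along the head reduction $M=M_0\hred M_1\hred\cdots$. Either no $M_k$ is an abstraction, and then $M_kN\hred M_{k+1}N$ for all $k$. Or, at the first $k$ with $M_k=\la x.P$, you get $M_kN\hred P[x:=N]$, whose head reduction is infinite by your substitution case.
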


\section{Intersection Types and Filter Models}\label{itfm}
% !TEX root =dgh.tex

This section is devoted to the definitions of intersection types, type theories, type assignment systems and filter models. 

Up to Definition~\ref{tas} (included) we essentially follow Sections 13.1 and 13.2 of~\cite{BDS13}. The only differences are that, in defining intersection types and subtyping,   we require
 the constant $\U$, which is optional in~\cite{BDS13},  and our subtyping relation has the additional Axiom (\rn{$\U_{top}$}) and Rule \rn{($\to^\sim$)}. 

\begin{definition}[Intersection Type Theories]\label{itt's}\hspace{-3pt}
\begin{enumerate}
\item Given a set of constants $\cSet$ and a distinguished constant $\U$, the set $\IT_\cSet$ of {\em intersection types} over $\cSet\cup\set{\U}$ is generated by the grammar:
\Cline{\type::=\vartype\mid\U\mid\type\to\type\mid\type\cap\type,}
where $\vartype\in\cSet$. 
\item A {\em subtyping relation}  $\leq$ is a binary relation  on $\IT_\cSet$ closed under the following axioms and rules: 
\Cline{\begin{array}{ccccccc}\type\leq\type\ \rn{(Refl)}&\qquad&\types\cap\type\leq\types\ \rn{(IncL)}&\qquad&\types\cap\type\leq\type\ \rn{(IncR)}&\qquad&\type\leq\U\ \rn{($\U_{top}$)}
\end{array}}
\Cline{\begin{array}{ccccc}
\prooftree
\types\leq\type\quad\types\leq\type'
\justifies
\types\leq\type\cap\type'
\using \rn{(Glb)}
\endprooftree&\qquad&
\prooftree
\types\leq\type\quad\type\leq\type'
\justifies
\types\leq\type'
\using \rn{(Trans)}
\endprooftree&\qquad&
\prooftree
\types'\sim\types\quad\type\sim\type'
\justifies
\types\to\type\sim\types'\to\type' 
\using \rn{($\to^\sim$)}
\endprooftree\end{array}}
where $\type\sim\types$ is short for  $\type\leq\types$ and $\types\leq\type$.
\item An {\em intersection type theory (itt)} $\ITT$ is determined by a set of type constants $\cSet$ and a subtyping relation on the set $\IT_\cSet$, \ie\  
$\ITT=\Pair{\cSet}{\leq_\ITT}$.
\end{enumerate}
\end{definition}

\noindent
We  adopt the convention  that $\cap$ has precedence over $\to$.  
The above rules imply that $\cap$ preserves the congruence of its arguments 
w.r.t. $\sim_\ITT$ and moreover that it %$\cap$ 
is  idempotent, commutative and associative with neutral element $\U$. Moreover Rule  \rn{($\to^\sim$)} implies that $\to$ preserves the congruence of its arguments w.r.t. $\sim_\ITT$. This rule is less demanding than the usual covariance/contravariance  of the arrow given in Rule  \rn{($\to$)} of Figure~\ref{fig:axRules}. 
We assume that $\bigcap_{i\in\emptyset}\type_i=\U$.  We summarise this with a proposition which will be useful in Section~\ref{gitts}.

\begin{proposition}\label{semilattice} The equivalence classes of an itt $\ITT$ w.r.t. the equivalence $\sim_\ITT$ define a 
meet-semilattice enriched with a binary arrow constructor. 
\end{proposition}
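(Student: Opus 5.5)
We work on the quotient $\IT_\cSet/\sim_\ITT$, writing $[\type]$ for the class of $\type$, and verify the claims in order: $\sim_\ITT$ is an equivalence, $\cap$ and $\to$ are compatible with it, and the induced order is a meet-semilattice with the class of $\cap$ as binary meet and $[\U]$ as top.

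\emph{Equivalence and order.} Reflexivity of $\sim_\ITT$ comes from (Refl), symmetry holds by definition, and transitivity follows from (Trans) applied in both directions. On classes we set $[\type]\le[\types]$ iff $\type\leq_\ITT\types$. This is well defined: if $\type\sim\type'$ and $\types\sim\types'$, then $\type'\leq\type\leq\types\leq\types'$ by (Trans). The order is reflexive by (Refl), transitive by (Trans), and antisymmetric by the definition of $\sim$. Hence the quotient is a poset.

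\emph{Meets.} We show that $[\type\cap\types]$ is the greatest lower bound of $[\type]$ and $[\types]$. By (IncL) and (IncR) it is a lower bound. If $\typeC\leq\type$ and $\typeC\leq\types$, then (Glb) gives $\typeC\leq\type\cap\types$, so it is the greatest one. Compatibility of $\cap$ with $\sim$ follows. Suppose $\type\sim\type'$ and $\types\sim\types'$. Then $\type\cap\types\leq\type\leq\type'$ and $\type\cap\types\leq\types\leq\types'$, using (IncL), (IncR) and (Trans). By (Glb), $\type\cap\types\leq\type'\cap\types'$, and the reverse inequality holds symmetrically. So $[\type]\wedge[\types]:=[\type\cap\types]$ is a well-defined binary meet. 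Idempotence, commutativity and associativity then hold automatically, because meets in a poset are unique.

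\emph{Top.} By ($\U_{top}$), $[\U]$ is the greatest element. Since $\U\cap\type\leq\type$ and $\type\leq\U\cap\type$ by (Glb), $[\U]$ is the neutral element of $\wedge$. This justifies the convention $\bigcap_{i\in\emptyset}\type_i=\U$.

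\emph{Arrow.} Define $[\types]\to[\type]:=[\types\to\type]$. Rule ($\to^\sim$) states exactly that $\types'\sim\types$ and $\type\sim\type'$ imply $\types\to\type\sim\types'\to\type'$, so the operation does not depend on representatives. No monotonicity is claimed, which matches the fact that only ($\to^\sim$), not the usual rule ($\to$), is assumed.

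The only step that needs any care is the congruence of $\cap$. It is not a primitive rule, so it must be derived from (IncL), (IncR), (Glb) and (Trans) as above. Everything else is direct unfolding of the axioms.
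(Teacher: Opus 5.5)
Your proof is correct and essentially matches the paper's own justification, which is just the remark preceding the proposition: the rules make $\cap$ a congruence that is idempotent, commutative and associative with neutral element $\U$, and Rule \rn{($\to^\sim$)} makes $\to$ well defined on classes. The only difference is cosmetic: you get the semilattice laws from $[\type\cap\types]$ being the greatest lower bound in the quotient poset, rather than checking them one by one.
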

%We assume $\cap$ to be idempotent, commutative and associative with neutral element $\U$ and $\bigcap_{i\in\emptyset}\type_i=\U$. 

\begin{definition}[Type Assignment System]\label{tas}
The {\em intersection type assignment system} induced by an itt $\ITT=\Pair{\cSet}{\leq_\ITT}$ is a formal system deriving judgements of the shape
$\der\Gamma M\type$, where $\type\in\IT_\cSet$ and a {\em basis} $\Gamma$ is a finite mapping from term variables to types in $\IT_\cSet$:
\Cline{\Gamma::=\emptyset\mid\Gamma, x:\type.} 
\noindent
The axioms and rules of the type system are the following, where by writing $\Gamma, x:\type$ we assume that $x$ does not occur in $\Gamma$.

\Cline{\begin{array}{ccc}
\prooftree
\justifies
\Gamma,x:\type\vdash x:\type\using\rn{(Ax)}
\endprooftree
&\qquad&
\prooftree
\justifies\Gamma\vdash M:\U\using\rn{($\U$)}
\endprooftree
\\[5mm]
\prooftree
\Gamma, x:\types\vdash M:\type
\justifies
\Gamma\vdash\lambda  x. M:\types\to\type
\using\rn{($\to$I)}
\endprooftree
&&
\prooftree
\Gamma\vdash M:\types\to\type\quad\Gamma\vdash N:\types
\justifies
\Gamma\vdash M N:\type
\using\rn{($\to$E)}
\endprooftree\\[7mm]
\prooftree
\Gamma\vdash M:\types\quad\Gamma\vdash M:\type
\justifies
\Gamma\vdash M:\types\cap\type
\using\rn{($\cap$I)}
\endprooftree
&&
\prooftree
\Gamma\vdash M:\types\quad \types\leq_\ITT\type
\justifies
\Gamma\vdash M:\type
\using\rn{($\leq$)}
\endprooftree
\end{array}}
\end{definition}

It is easy to verify that the following rules are admissible
\Cline{\prooftree
{\Gamma,x:\types}\vdash M: \type \quad\typeC\leq_\ITT\types
\justifies
{\Gamma,x:\typeC}\vdash M :\type
\using \rn{($\leq$-L)}
\endprooftree\qquad\qquad
\prooftree
{\Gamma}\vdash M :\type \quad x\not\in \Gamma
\justifies
{\Gamma,x:\types}\vdash M :\type
\using \rn{(Weakening)}
\endprooftree}
where $x\not\in \Gamma$ is short for 
$x$ does not occur in $\Gamma$.

\bigskip

The main  properties of intersection type assignment systems are the Inversion Lemma and Subject Expansion,  which are proved by induction on type derivations. 

\begin{lemma}[Inversion Lemma~{\cite[Theorem 14.1.1]{BDS13}}]\label{il}\hspace{-3pt}
\begin{enumerate}
\item\label{il1} If $\der\Gamma x \type$ and $\type\nsim_\ITT\U$, then $\Gamma(x)\leq_\ITT \type$;
\item\label{il2} If $\der\Gamma {MN} \type$ and $\type\nsim_\ITT\U$, then there are $I$ and $B_i$, $C_i$ for $i\in I$ such that $\bigcap_{i\in I} C_i\leq_\ITT A$ and $\der\Gamma M {B_i\to C_i}$ and $\der\Gamma N {B_i}$ for all $i\in I$;
\item\label{il3} If $\der\Gamma {\la x.M} \type$, then there are $I$ and $B_i$, $C_i$ for $i\in I$ such that $\bigcap_{i\in I}(B_i\to C_i)\leq_\ITT A$ and \linebreak $\der{\Gamma,x:B_i} M {C_i}$  for all $i\in I$. 
\end{enumerate}
\end{lemma}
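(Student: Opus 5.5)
The Inversion Lemma is proved by induction on the derivation of the judgement, simultaneously for the three items but treating each separately according to the last rule applied. In all three cases the only rules that can end a derivation whose subject has the given shape are the syntax-directed rule for that shape, ($\U$), ($\cap$I) and ($\leq$). So the proof reduces to checking that each of these rules preserves the conclusion. To absorb ($\U$) and ($\cap$I) I will use the convention $\bigcap_{i\in\emptyset}=\U$.

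For item~\ref{il1} the possible last rules are the following.
\begin{itemize}
\item (Ax): we get $\Gamma(x)=\type$, and (Refl) closes the case.
\item ($\U$): excluded by the hypothesis $\type\nsim_\ITT\U$.
\item ($\cap$I) with $\type=\types\cap\typeC$: neither $\types$ nor $\typeC$ can be $\sim_\ITT\U$ unless the other carries all the content. If, say, $\types\sim_\ITT\U$, then $\type\sim_\ITT\typeC$, and induction on the premise for $\typeC$ plus (Trans) suffices. Otherwise the induction hypothesis gives $\Gamma(x)\leq\types$ and $\Gamma(x)\leq\typeC$, and (Glb) concludes. Both subtypes cannot be $\sim\U$, since then $\type\sim\U$.
\item ($\leq$) from $\types\leq\type$: we have $\types\nsim\U$, because otherwise $\U\leq\type$ together with ($\U_{top}$) gives $\type\sim\U$. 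So the induction hypothesis yields $\Gamma(x)\leq\types\leq\type$, and (Trans) concludes.
\end{itemize}

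Item~\ref{il2} follows the same pattern.
\begin{itemize}
\item ($\to$E): take $I$ to be a singleton.
\item ($\cap$I): take the union of the two index sets given by the induction hypothesis (handling a component $\sim\U$ as above). Then $\bigcap C_i\leq\types\cap\typeC$ by (Glb), using the monotonicity of $\cap$, which is derivable from (IncL), (IncR), (Glb) and (Trans).
\item ($\leq$): the same index set works, by (Trans).
\end{itemize}

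Item~\ref{il3} has no side condition, so ($\U$) is handled by $I=\emptyset$, since $\U\leq\U$. The remaining cases are as for item~\ref{il2}, with ($\to$I) giving a singleton $I$. Here the union of index sets needs no care about $\U$: an empty family simply contributes nothing.

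The only delicate point is the bookkeeping around $\U$ in items~\ref{il1} and~\ref{il2}. We must ensure the induction hypothesis is applicable, \ie\ the premises' types are not $\sim\U$, or else that such premises can be discarded. This is exactly what the case analysis above handles. Everything else is routine use of the semilattice laws from Definition~\ref{itt's} and Proposition~\ref{semilattice}.
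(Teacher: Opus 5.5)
Your proof is correct and takes the route the paper intends. The paper gives no argument of its own beyond citing \cite{BDS13} and remarking that the lemma is proved by induction on type derivations, and your case analysis on the last rule, including the $\U$-bookkeeping in the ($\cap$I) and ($\leq$) cases of items~1 and~2, carries that induction out soundly. As a minor simplification, with the convention $\bigcap_{i\in\emptyset}=\U$ item~2 already holds without the side condition $\type\nsim_\ITT\U$ (take $I=\emptyset$ when $\type\sim_\ITT\U$), so proving that stronger form removes the need to discard premises typed by $\U$.
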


\begin{theorem}[Subject Expansion~{\cite[Corollary 14.2.5(ii)]{BDS13}}]\label{se}
$M\bred M'$ and $\der\Gamma {M'} \type$ imply \Cline{\der\Gamma {M} \type.}
\end{theorem}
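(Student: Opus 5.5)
The plan is to prove Subject Expansion in two stages: first expansion at the root redex, then lifting it through arbitrary contexts.

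\emph{Step 1 (key lemma).} If $\der\Gamma{M[x:=N]}\type$, then $\der\Gamma{(\la x.M)N}\type$. If $\type\sim_\ITT\U$ we conclude directly by \rn{($\U$)} and \rn{($\leq$)}, so assume $\type\nsim_\ITT\U$. By induction on the derivation of $\der\Gamma{M[x:=N]}\type$ (equivalently on $M$), I show the following. There are types $B_j$, $j\in J$, with $\der\Gamma N{B_j}$ for every $j$. Moreover $\der{\Gamma,x:\bigcap_{j\in J}B_j}M\type$, where the empty intersection is $\U$.

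\begin{itemize}
\item If $M=x$, take $J=\{*\}$ and $B_*=\type$.
\item If $M=y\neq x$, take $J=\emptyset$ and use \rn{(Weakening)}.
\item For the cases of \rn{($\to$I)}, \rn{($\to$E)} and \rn{($\cap$I)}, collect the families obtained from the subderivations. Then put the whole intersection on $x$, using \rn{($\leq$-L)} together with \rn{(IncL)}/\rn{(IncR)} to strengthen each premise's assumption on $x$.
\item For the \rn{($\leq$)} case, apply the induction hypothesis to the premise and reuse \rn{($\leq$)}.
\end{itemize}

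Having this, \rn{($\to$I)} gives $\der\Gamma{\la x.M}{\bigcap_j B_j\to\type}$. By \rn{($\cap$I)} (or \rn{($\U$)} when $J=\emptyset$) we get $\der\Gamma N{\bigcap_j B_j}$. Then \rn{($\to$E)} yields the redex.

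\emph{Step 2 (contexts).} Prove by induction on the context that $M\bred M'$ and $\der\Gamma{M'}\type$ imply $\der\Gamma M\type$. The root case is Step 1. If $\type\sim_\ITT\U$, the claim is trivial. Otherwise use Lemma~\ref{il}.
\begin{itemize}
\item For an application $PQ\bred P'Q$ (or $PQ\bred PQ'$), Lemma~\ref{il}(\ref{il2}) gives $\bigcap_{i} C_i\leq_\ITT\type$ with typings of $P'$ and $Q$. Apply the induction hypothesis to each premise, then rebuild with \rn{($\to$E)}, \rn{($\cap$I)} and \rn{($\leq$)}.
\item For an abstraction $\la y.P\bred\la y.P'$, use Lemma~\ref{il}(\ref{il3}) in the same way, with \rn{($\to$I)} in place of \rn{($\to$E)}.
\end{itemize}

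\emph{Main obstacle.} The delicate point is Step 1 with the weak rule \rn{($\to^\sim$)}. Because the arrow is not contravariant, we cannot weaken $B\to C$ types freely. The argument above avoids this: it only ever applies \rn{($\leq$)} to the same subject, and \rn{($\leq$-L)} to bases, both of which are admissible in any itt. A second concern is the $\U$-typed case of \rn{($\to$E)}: Lemma~\ref{il} requires $\type\nsim_\ITT\U$. I handle this by treating $\U$-typed subderivations via \rn{($\U$)} directly, contributing $J=\emptyset$.
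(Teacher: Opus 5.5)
Your proof is correct and is essentially the standard argument the paper relies on. The paper gives no proof of its own: it defers to the cited corollary of BDS13 and notes that the proof is by induction on type derivations, namely an expansion lemma for the contracted redex $M[x:=N]$ followed by closure under contexts, which is what you do (your use of the Inversion Lemma in Step 2 is fine).

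Two small points to tidy. First, the induction in Step 1 has to be on the derivation rather than on $M$, since the \rn{($\cap$I)} and \rn{($\leq$)} cases keep the same subject. Second, in the \rn{($\to$I)} case the typings of $N$ obtained from the premise live in the basis $\Gamma, y:B$, so you need the easily admissible strengthening property (with $y$ not free in $N$ by Barendregt's convention) to get $\Gamma\vdash N:B_j$.
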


Also crucial is the property of Subject Reduction, which however holds only with a proviso.

\begin{theorem}[Subject Reduction~{\cite[Proposition 14.2.1(ii)]{BDS13}}]\label{sr}
\Cline {M\bred M' \text{ and } \der\Gamma M \type \text{ imply } \der\Gamma {M'} \type} \begin{center} if and only if\end{center}
 \Cline{\der\Gamma{\lambda x.N}{\types\to\typeC} \text{ implies } \der{\Gamma,x:\types}{N}{\typeC}.}
\end{theorem}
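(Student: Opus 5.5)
Both directions go by the standard analysis of $\beta$-steps, using the Inversion Lemma (Lemma~\ref{il}) and a substitution property. The one auxiliary fact I would isolate first is the Substitution Lemma: if $\der{\Gamma,x:\types}{N}{\typeC}$ and $\der\Gamma P\types$ then $\der\Gamma{N[x:=P]}{\typeC}$. It is proved by induction on the derivation of $\der{\Gamma,x:\types}{N}{\typeC}$. In the (Ax) case one replaces the use of $x$ by the derivation for $P$. In the ($\to$I) case one uses Barendregt's convention and (Weakening). The remaining rules are immediate.

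\emph{(If.)} Assume the condition: $\der\Gamma{\lambda x.N}{\types\to\typeC}$ implies $\der{\Gamma,x:\types}{N}{\typeC}$. I would prove subject reduction by induction on the derivation of $\der\Gamma M\type$, with a case split on the last rule. Rules (U), ($\cap$I) and ($\leq$) are handled directly by the induction hypothesis. For ($\to$I) and for ($\to$E) where the contracted redex is inside $M$ or $N$, one applies the induction hypothesis to the premise containing the redex. The key case is $M=(\lambda x.N)P$ contracted at the root, with $\type\nsim_\ITT\U$; if $\type\sim_\ITT\U$ the conclusion holds by ($\U$) and ($\leq$). Lemma~\ref{il}(\ref{il2}) gives $I$, $\types_i$, $\typeC_i$ with $\bigcap_i\typeC_i\leq_\ITT\type$, together with $\der\Gamma{\lambda x.N}{\types_i\to\typeC_i}$ and $\der\Gamma P{\types_i}$. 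The hypothesis then yields $\der{\Gamma,x:\types_i}{N}{\typeC_i}$. The Substitution Lemma gives $\der\Gamma{N[x:=P]}{\typeC_i}$ for every $i$, and ($\cap$I) followed by ($\leq$) gives $\type$. When $I=\emptyset$ the intersection is $\U$, so $\type\sim_\ITT\U$ and rule ($\U$) suffices. Induction on derivations is a better choice than inversion on the whole term here, because rules ($\leq$) and ($\cap$I) can interleave arbitrarily.

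\emph{(Only if.)} Assume subject reduction and let $\der\Gamma{\lambda x.N}{\types\to\typeC}$. Choose a fresh variable $y\notin\Gamma$. By (Weakening) and (Ax) we get $\der{\Gamma,y:\types}{(\lambda x.N)y}{\typeC}$ via ($\to$E). Since $(\lambda x.N)y\bred N[x:=y]$, subject reduction gives $\der{\Gamma,y:\types}{N[x:=y]}{\typeC}$. Renaming $y$ back to $x$, which is legitimate because $x\notin\Gamma$ by the convention, gives $\der{\Gamma,x:\types}{N}{\typeC}$. Renaming is a routine induction on derivations.

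The main obstacle is the root-redex case of (If). The key point is that the condition is stated for the \emph{same} basis $\Gamma$, and the Inversion Lemma supplies arrow types for $\lambda x.N$ in exactly that basis. So no extra assumption, such as $\to$-soundness or contravariance of the arrow, is needed; this matters because only ($\to^\sim$) is assumed. The other delicate point is the Substitution Lemma, where the (Ax) case has to be handled for every occurrence of $x$.
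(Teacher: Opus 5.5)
Your proof is correct. It is the standard argument behind the cited result: the paper gives no proof of its own and defers to \cite[Proposition 14.2.1(ii)]{BDS13}. The ``only if'' direction types $(\lambda x.N)y$ for a fresh $y$ and reduces it, and the ``if'' direction combines the Inversion Lemma, the hypothesis and the Substitution Lemma at the root redex, lifted to arbitrary positions by induction on derivations. (As a minor simplification, inversion is not needed at the root: a final \rn{($\to$E)} already gives $\der\Gamma{\lambda x.N}{\types\to\type}$ and $\der\Gamma P\types$, and the other last rules go through the induction hypothesis.)
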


\noindent 
In fact, not all   
 type systems  induced by itt's
 enjoy Subject Reduction.
Consider  $\ITT_0=\Pair{\set{\vartype_0,\vartype_1}}{\leq_{\ITT_0}}$, where $\ITT_0$ has only the axiom $\vartype_0\to\vartype_0\leq\vartype_1\to\vartype_0$, then  
 $\deri{}{\ITT_0}{\lambda x.x}{\vartype_1\to\vartype_0}$, but $x:\vartype_1\not\vdash_{\ITT_0}x:\vartype_0$. Subject Reduction fails since $\deri{y:\vartype_1}{\ITT_0}{(\lambda x.x)y}{\vartype_0}$, but $y:\vartype_1\not\vdash_{\ITT_0}y:\vartype_0$.

\medskip

A sufficient  but not necessary  condition for Subject Reduction is $\beta$-soundness. 

\begin{definition}[$\beta$-soundness~{\cite[Definition 14.1.4]{BDS13}}]\label{bs}
An  itt $\ITT$ is {\em $\beta$-sound} if $\type\nsim_\ITT\U$ and
$\bigcap_{i\in I}(\types_i\to\type_i)\leq_\ITT\types\to\type$ imply that there is $J\subseteq I$ such that $ \types\leq_\ITT\bigcap_{j\in J}\types_j$ and $\bigcap_{j\in J}\type_j\leq_\ITT\type$.
\end{definition}

\noindent
For example the itt $\ITT_1=\Pair{\set{\vartype_0,\vartype_1}}{\leq_{\ITT_1}}$, where $\leq_{\ITT_1}$ has no other axioms and rules, is 
$\beta$-sound. In contrast, the itt $\ITT_0$ defined above is not $\beta$-sound.  The itt $\ITT_0$ can be made $\beta$-sound by adding  the axiom  $\vartype_1\leq\vartype_0$.  
Two  itt's which are not $\beta$-sound but still satisfy Subject Reduction are defined in~\cite{CDHL84,ABD03}. 

\medskip

 We consider  two important classes of itt's which are defined and motivated in~\cite{DGH25} (Definitions 26 and 27) .

 \begin{definition}[Set Condition]\label{dg}
An itt $\ITT=\Pair{\cSet}{\leq_\ITT}$ satisfies the {\em set condition} if  \Cline{\text{$\bigcap_{i\in I}\type_i\leq_\ITT\types_1\to\cdots\to\types_n\to\typeC$ with $\typeC\nsim_\ITT\U$  implies $\type_j\sim_\ITT\types_1\to\cdots\to\types_n\to\typeD$}} for some $j\in I$ and some $\typeD\nsim_\ITT\U$ such that $\typeC\cap\typeD\sim_\ITT\typeC$. \end{definition}
\noindent
%We remark that the above condition cannot be taken as a ``rule of derivation'' in the system, but it is only a rule of proof, \ie\ an admissible rule.

\begin{figure}
\begin{center}
\begin{tabular}{ll}
 {\large $\U\sim\type\to\U$\quad\rn{($\to\U$)}} &
{\large $(\types\to\type)\cap(\types\to\type')\sim\types\to\type\cap\type'$ \quad\rn{($\to\cap$)}}
\\[9pt] 
{\large 
$\prooftree
\types'\leq\types\quad\type\leq\type'
\justifies
\types\to\type\leq\types'\to\type'\ \ 
\using \rn{($\to$)}
\endprooftree$}&\qquad
{\large $\prooftree
\U \leq B \to A
\justifies
\U\leq A
\using \rn{($\U\leq$)}
\endprooftree$} 
\end{tabular}
\end{center}
\caption{Some axioms and rules for itt's.}\label{fig:axRules}
\end{figure}

\begin{definition}\label{lgs}
Consider axioms and rules in Figure~\ref{fig:axRules}.  
\begin{enumerate}
\item\label{lgs2}  An itt  is {\em set-like} if  it satisfies the set condition and at least  Axioms \rn{($\to\U$)} and \rn{($\to\cap$)} hold. 
\item\label{lgs3} An itt is {\em $\to$-sound} if it satisfies at least  Axioms \rn{($\to\U$)}, \rn{($\to\cap$)}, and Rules \rn{($\to$)}, \rn{($\U\leq$)} hold.\end{enumerate}
\end{definition}

In order to discuss $\la$-models over itt's we recall the definition of $\la$-model.
An {\em environment} on the set $\D$ is a total mapping  from term variables to elements of $\D$. Let $\env$ range over environments. As usual, we denote by $\subs\env\x \dd$ the environment  which returns $\dd$ when applied to $\x$  and $\env(\y)$ when applied to $\y\not=\x$. 

\begin{definition}[$\la$-model~{\cite[Definition 16.1.2]{BDS13}}]\label{dlm} 
A {\em $\la$-model} is a triple $\lm\D\cdot{\semD{~}\D{}}$, where $\cdot$ is a binary operation on  $\D$ {\em (application)}, $\semD{~}\D{}$ is a mapping  from $\la$-terms and environments in $\D$ to elements of $\D$ {\em (term interpretation)},  and  
$\semD{~}\D{}$ satisfies:
\begin{enumerate}
\item $\semD\x\D\env=\env(\x)$;
\item $\semD{\M\N}\D\env=\semD\M\D\env \cdot\semD\N\D\env$;
\item $\semD{\la\x.\M}\D\env=\semD{\la\y.\M[\x:=\y]}\D\env$;
\item $\forall\dd\in\D. \semD\M\D{\subs\env\x \dd}=\semD\N\D{\subs\env\x \dd}$ implies $\semD{\la\x.\M}\D\env=\semD{\la\x.\N}\D\env$; 
\item $\env(\x)=\env'(\x)$ for all variables $\x$ which occur free in $M$ implies $\semD\M\D\env=\semD\M\D{\env'}$;
\item $\semD{\la\x.\M}\D\env\cdot\dd=\semD\M\D{\subs\env\x \dd}$.
\end{enumerate}
\end{definition}

\noindent
This definition of $\lambda$-model was first formulated by Hindley and Longo~\cite{HL80}.

\medskip

 We can build $\la$-models
whose domains  are sets of filters of types according to the following definition. 
%Filters of types are sets closed under intersection  and subtyping.

\begin{definition}[Filter~{\cite[Definition 13.4.1]{BDS13}}]
Let $\ITT=\Pair\cSet{\leq_\ITT}$ be an itt and
 $\flt\subseteq\IT_\cSet$. The set $\flt$ is a {\em $\ITT$-filter} if: 
\begin{itemize}
\item $\U\in\flt$;
\item $\type,\types\in\flt$ imply $\type\cap\types\in\flt$;
\item $\type\in\flt$ and $\type\leq_\ITT\types$ imply $\types\in\flt$.
\end{itemize}
\end{definition}

\noindent
We use $\flt$ and $\fltg$ as metavariables for filters and $\Flt_\ITT$ to denote the set of $\ITT$-filters. 
 If $X\subseteq \IT_\cSet$ we denote by $\uparrow_\ITT X$ the smallest $\ITT$-filter which contains $X$. If $X=\set \type$ we use $\uparrow_\ITT \type$ as short for $\uparrow_\ITT \set \type$.

Filters can be endowed with an applicative structure as follows:  
\begin{definition}[Filter Structure]\label{fs} Let $\Env{\ITT}$ be the set of environments on $\Flt_\ITT$. 
The {\em filter structure over $\ITT$} is  the triple $\Triple{\Flt_\ITT}{\appF}{\semD{~}{\Flt_\ITT}{}}$ where 
\begin{itemize}
\item {\em application}, $\appF:\Flt_\ITT\times\Flt_\ITT\to\Flt_\ITT$, is defined by
\Cline{  \flt \appF \fltg=\set{\type\mid \exists \types\in\fltg.\ \types\to\type\in\flt};}
\item {\em term interpretation}, $\semD{~}{\Flt_\ITT}{}:\La\times\Env{\ITT}\to\Flt_\ITT$, is defined by
\Cline{\semD{M}{\Flt_\ITT}{\env}=\set{\type\in\IT_\cSet\mid\exists\isWFEnv{\Gamma}{\env}.\ \der{\Gamma}{M}{\type}},}
where $\env$ ranges over $\Env{\ITT}$ and $\isWFEnv{\Gamma}{\env}$ if $x:\type\in\Gamma$ implies $\type\in\env(x)$.
\end{itemize}
\end{definition}

\begin{definition}[Filter Model] A {\em filter model} is a filter structure where all conditions of Definition~\ref{dlm} hold.
\end{definition}
It is easy to verify that $\semD{~}{\Flt_\ITT}{}$ satisfies all conditions required to be a $\la$-model (Definition~\ref{dlm}), but the last one, which is essential when $\dd$ is the interpretation of a $\la$-term.  We always have $\semD{\M[\x:=\N]}{\Flt_\ITT}\env\subseteq\semD{(\la\x.\M)\N}{\Flt_\ITT}\env$, since Subject Expansion holds by Theorem~\ref{se}. 

\begin{theorem}[{\cite[Proposition 16.2.4]{BDS13}}]\label{fsfm}
The filter structure over $\ITT$ is a filter model iff 
\Cline{\semD{(\la\x.\M)\N}{\Flt_\ITT}\env\subseteq\semD{\M[\x:=\N]}{\Flt_\ITT}\env} for all $\la$-terms $M,N\in\Lambda$, all variables $x$ and  all environments $\env$ in $\Flt_\ITT$.
\end{theorem}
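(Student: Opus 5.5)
We prove the two directions of Theorem~\ref{fsfm} separately.

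\textbf{Only if.} Assume the filter structure is a filter model. We argue by equalities in the model. By clause 2 of Definition~\ref{dlm}, $\semD{(\la\x.\M)\N}{\Flt_\ITT}\env=\semD{\la\x.\M}{\Flt_\ITT}\env\appF\semD{\N}{\Flt_\ITT}\env$. By clause 6 with $\dd=\semD{\N}{\Flt_\ITT}\env$, this equals $\semD{\M}{\Flt_\ITT}{\subs\env\x\dd}$. What remains is a substitution lemma, $\semD{\M}{\Flt_\ITT}{\subs\env\x{\semD\N{\Flt_\ITT}\env}}=\semD{\M[\x:=\N]}{\Flt_\ITT}\env$. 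This holds in every $\la$-model: prove it by induction on $\M$ using clauses 1--5, with the abstraction case handled via clauses 3 and 4 and Barendregt's convention. Equality gives the inclusion.

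\textbf{If.} Assume the inclusion. We first check that the filter structure always satisfies clauses 1--5, and then derive clause 6.

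\emph{Clauses 1--5.}
\begin{itemize}
\item Clause 1 follows from (Ax) and Lemma~\ref{il}(\ref{il1}), since $\env(\x)$ is a filter; for types $\sim\U$ use ($\U$).
\item Clause 2 is handled as follows. For $\subseteq$, use Lemma~\ref{il}(\ref{il2}) and filter closure: the types $\types_i$ lie in $\semD\N{}\env$, so each $\typeC_i$ lies in the application, and the application of filters is itself a filter (straightforward, using $\U\to\U$ only up to $\sim$). For $\supseteq$, merge the two bases with $\cap$ via ($\leq$-L), which is allowed since environments give filters, and apply ($\to$E).
\item Clauses 3 and 5 are immediate from $\alpha$-invariance of derivations and the fact that bases only mention free variables (weaken or strengthen the basis).
\item Clause 4 holds because, by Lemma~\ref{il}(\ref{il3}), typings of $\la\x.\M$ come from typings $\Gamma,x:\types_i\vdash\M:\typeC_i$. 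Take $\dd=\uparrow_\ITT\types_i$; then the hypothesis transfers these to $\N$, and we rebuild with ($\to$I), ($\cap$I) and ($\leq$).
\end{itemize}

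\emph{Clause 6.} This is the crux. The inclusion $\supseteq$ always holds. Given $\typeC\in\semD{\M}{}{\subs\env\x\dd}$ with basis $\Gamma,x:\types$ and $\types\in\dd$, we have $\Gamma\vdash\la\x.\M:\types\to\typeC$, so $\typeC\in\semD{\la\x.\M}{}\env\appF\dd$.

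For $\subseteq$, take $\typeC$ with $\types\to\typeC\in\semD{\la\x.\M}{}\env$ and $\types\in\dd$. The difficulty is that $\dd$ need not be the interpretation of a term, so the hypothesis cannot be applied directly. Pick a fresh variable $y$ and the environment $\env'=\subs\env y\dd$. Then $\types\to\typeC\in\semD{\la\x.\M}{}{\env'}$ and $\types\in\semD y{}{\env'}$, so by clause 2, $\typeC\in\semD{(\la\x.\M)y}{}{\env'}$. The hypothesis now gives $\typeC\in\semD{\M[\x:=y]}{}{\env'}$, and by clauses 5 and 3 (renaming) this set is $\semD{\M}{}{\subs\env\x\dd}$.

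The main obstacle is this fresh-variable trick, together with verifying the routine filter closure needed for clauses 2 and 4. Both are finite bookkeeping arguments with the admissible rules.
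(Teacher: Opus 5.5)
Your decomposition is the one the paper indicates. The paper only cites the result, remarking that clauses 1--5 of Definition~\ref{dlm} always hold, that the reverse inclusion is Subject Expansion, and that clause~6 is where the hypothesis is needed. Your fresh-variable step with $\subs{\env}{y}{\dd}$ is a good way to make precise the paper's remark that clause~6 matters only when $\dd$ interprets a term. The only-if direction via the substitution lemma is fine: its abstraction case also needs clause~5 and an auxiliary fresh variable when $x$ is free in $N$, or you can instead substitute derivations of $N$ for the axioms on $x$, as in Lemma~\ref{aux}.

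The step that fails as written is the claim that $\flt\appF\fltg$ is a filter. With application as literally defined, $\set{\type\mid\exists\types\in\fltg.\ \types\to\type\in\flt}$, containing $\U$ needs an arrow type above $\U$ (Axiom \rn{($\to\U$)}). Closure under $\cap$ and $\leq_\ITT$ needs something like \rn{($\to$)} and \rn{($\to\cap$)}. None of these is part of Definition~\ref{itt's}, so there is no ``$\U\to\U$ up to $\sim$'' to use. Concretely, in the $\beta$-sound itt $\ITT_1$ no arrow type lies above $\U$, hence $\uparrow_{\ITT_1}\U\appF\fltg=\emptyset$ for every filter $\fltg$. With $\env$ constantly equal to $\uparrow_{\ITT_1}\U$ you get $\U\in\semD{yz}{\Flt_{\ITT_1}}{\env}$ but $\semD{y}{\Flt_{\ITT_1}}{\env}\appF\semD{z}{\Flt_{\ITT_1}}{\env}=\emptyset$. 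So the $\subseteq$ half of clause~2 cannot be proved. Under this reading the ``if'' direction is actually false, since $\beta$-soundness gives Subject Reduction and hence the inclusion; the paper's ``easy to verify'' passes over the same point.

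The repair is to define application as the generated filter, $\flt\appF\fltg=\uparrow_\ITT\set{\type\mid\exists\types\in\fltg.\ \types\to\type\in\flt}$, which coincides with the literal set for $\to$-sound itt's. Then your clause~2 argument works, except that for $\supseteq$ you must add \rn{($\cap$I)} and \rn{($\leq$)} after merging bases. In clause~6, first reduce $\typeC\in\semD{\la x.M}{\Flt_\ITT}{\env}\appF\dd$ to generators $\types_i\to\typeC_i$ with $\bigcap_i\typeC_i\leq_\ITT\typeC$. Handle each $\typeC_i$ by your fresh-variable argument, and conclude because $\semD{M}{\Flt_\ITT}{\subs{\env}{x}{\dd}}$ is a filter. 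Finally, $\semD{M[x:=y]}{\Flt_\ITT}{\subs{\env}{y}{\dd}}=\semD{M}{\Flt_\ITT}{\subs{\env}{x}{\dd}}$ is not an instance of clauses~3 and~5; get it by renaming $y$ back to $x$ throughout a derivation.
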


The condition $\semD{(\la\x.\M)\N}{\Flt_\ITT}\env\subseteq\semD{\M[\x:=\N]}{\Flt_\ITT}\env$ means that  all types of $(\la\x.\M)\N$ are also types of  $\M[\x:=\N]$,  % have the same types, 
\ie\  that the type system $\vdash_\ITT$ enjoys Subject Reduction.  Then the following theorem follows naturally,  being $\beta$-soundness a sufficient condition for Subject Reduction.

\begin{theorem}[{\cite[Corollary 16.2.9(i)]{BDS13}}]\label{cfsfm}
If $\ITT$ is a $\beta$-sound itt, then the filter structure over $\ITT$ is a  filter model.
\end{theorem}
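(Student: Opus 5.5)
By Theorem~\ref{fsfm}, it suffices to show that the type system $\vdash_\ITT$ enjoys Subject Reduction. Concretely, we must show that $\der\Gamma{(\la x.M)N}\type$ implies $\der\Gamma{M[x:=N]}\type$ for all $\Gamma$, $M$, $N$, $\type$; then taking $\Gamma$ with $\isWFEnv\Gamma\env$ yields $\semD{(\la\x.\M)\N}{\Flt_\ITT}\env\subseteq\semD{\M[\x:=\N]}{\Flt_\ITT}\env$. By Theorem~\ref{sr}, Subject Reduction is in turn equivalent to the following property: $\der\Gamma{\lambda x.P}{\types\to\typeC}$ implies $\der{\Gamma,x:\types}{P}{\typeC}$. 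So the plan is to derive this property from $\beta$-soundness. Alternatively, one could argue directly on the redex, combining the Inversion Lemma with a substitution lemma; the route through Theorem~\ref{sr} is cleaner.

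Assume $\der\Gamma{\lambda x.P}{\types\to\typeC}$. If $\typeC\sim_\ITT\U$, the conclusion follows from rule ($\U$) and rule ($\leq$). Otherwise, Lemma~\ref{il}(\ref{il3}) gives $I$ and $\types_i,\typeC_i$ such that
\[\textstyle\bigcap_{i\in I}(\types_i\to\typeC_i)\leq_\ITT\types\to\typeC \quad\text{and}\quad \der{\Gamma,x:\types_i}{P}{\typeC_i}\ \text{for all } i\in I.\]
Since $\typeC\nsim_\ITT\U$, $\beta$-soundness (Definition~\ref{bs}) provides $J\subseteq I$ with $\types\leq_\ITT\bigcap_{j\in J}\types_j$ and $\bigcap_{j\in J}\typeC_j\leq_\ITT\typeC$.

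For each $j\in J$ we have $\types\leq_\ITT\types_j$ by (IncL)/(IncR) and (Trans). The admissible rule ($\leq$-L) then turns $\der{\Gamma,x:\types_j}{P}{\typeC_j}$ into $\der{\Gamma,x:\types}{P}{\typeC_j}$. Applying ($\cap$I) repeatedly gives $\der{\Gamma,x:\types}{P}{\bigcap_{j\in J}\typeC_j}$. If $J=\emptyset$, this intersection is $\U$, which is derivable by ($\U$). Finally, ($\leq$) gives $\der{\Gamma,x:\types}{P}{\typeC}$.

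The only delicate point is the case $J=\emptyset$. There $\bigcap_{j\in J}\typeC_j=\U\leq_\ITT\typeC$ forces $\typeC\sim_\ITT\U$, which is consistent with the argument above and needs no separate treatment. Apart from that, the whole argument is a direct combination of the Inversion Lemma, $\beta$-soundness, and Theorems~\ref{sr} and~\ref{fsfm}. I expect no real obstacle.
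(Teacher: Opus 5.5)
Your proposal is correct and follows essentially the paper's route: Theorem~\ref{fsfm} reduces the claim to Subject Reduction, and you supply the step the paper leaves to \cite{BDS13}, deriving the $\lambda$-inversion property of Theorem~\ref{sr} from Lemma~\ref{il}(\ref{il3}) and $\beta$-soundness. One cosmetic point: when $\typeC\nsim_\ITT\U$ the case $J=\emptyset$ cannot occur, since it would give $\U\leq_\ITT\typeC$; it is not merely ``consistent'', though your derivation handles it anyway.
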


All set-like itt's generate filter models, since it is easy to check that the set condition implies $\beta$-soundness. 

As mentioned after Definition~\ref{bs},  in~\cite{CDHL84,ABD03} there are filter models  over  itt's which are not $\beta$-sound. 

It is interesting to notice that all continuous functions are representable in a filter model over a $\beta$-sound itt. This generalises Theorem 2.13(iii) in~\cite{CDHL84}.

Notably graph models~\cite{S76,P93} are isomorphic to filter models over set-like itt's and inverse limit models~\cite{S72,W76} are isomorphic to filter models over $\to$-sound itt's, see Example 35 in~\cite{DGH25}.

\bigskip

We conclude this section giving a crucial definition in this paper:
\begin{definition}[Sensible Itt, Sensible Filter Model]\label{def:sensible}
An itt $\ITT$ is {\em sensible} if all unsolvable terms are typed only by types equivalent to $\U$. A filter model $\Flt_\ITT$  is {\em sensible} if  $\ITT$ is sensible.  Otherwise the itt 
and the filter model are said to be {\em non-sensible}.
\end{definition}

Clearly in a sensible filter model all unsolvable terms are interpreted in the bottom filter  $\uparrow_\ITT \U$.

We remark that our notion of sensibility implies that the induced equational theory  is order-sensible as defined in~\cite[Definition 12.1(ii)(3)]{BM22}. Hence it is more restrictive than the standard condition on $\lambda$-models which only requires that all closed unsolvable terms are equated~\cite[Definition 4.1.7(ii)]{B85}. In Subsection~\ref{theories} we shall discuss the issue further.

\section{Transfer Theorems}\label{gitts}
                                         % !TEX root =dgh.tex

To the best of our knowledge the original proofs of head-normalisation for itt's, both historically and logically, are based on three methodologies: proof-normalisation~\cite{P65,P77,S78}, indexed reductions~\cite{L75}, or Tait-Girard reducibility arguments~\cite{T67,G71}. 
For the purpose of studying when itt's are sensible, once a given itt has been shown to be sensible, it is natural to try to design a setting in which this result can be easily transferred to similar itt's.
To this end it appears convenient to take a more abstract, and less language-dependent, view of itt's,  as in~\cite{DDH26}. We therefore  introduce below a notion of type structure, called {\em generalised intersection type theory} ({\em gitt}), together with a notion of morphism between such structures, which will allow for transferring  directly properties, such as sensibility, between type systems. We reckon this extension satisfactory, since the very proofs by Tait-Girard reducibility will appear as transfer results from the set of reducibility candidates viewed as generalised types, as will become apparent in Theorem~\ref{msts} and in the next section.

\begin{definition}[Generalised Intersection Type Theory]\label{gitt}
A {\em generalised intersection type theory} (shortly gitt) is a not trivial meet-semilattice $\Pair\gitt {\gleq_\gitt}$ with a top $\top_\gitt$ and closed under an arrow type constructor $\gto_\gitt$. We denote by $\gcap_\gitt$ the meet, by $\equiv_\gitt$
the equivalence induced by  $\gleq_\gitt$, and we use $\agi$, $\bgi$ to range over the elements of $\gitt$. 
\end{definition}

Proposition~\ref{semilattice} shows that an itt yields naturally a gitt.  Notably there are gitt's which are not itt's, an example is the gitt $\Pair{\SL}\subseteq$ defined in Theorem~\ref{msts}. 

\medskip

We introduce the following notion of morphism between gitt's.

\begin{definition}[Embedding]\label{gembd} Let $\Pair\gitt {\gleq_\gitt}$ and $\Pair{\gitt'} {\gleq_{\gitt'}}$ be two gitt's, then $\Pair\gitt {\gleq_\gitt}$ is {\em embeddable} in $\Pair{\gitt'} {\gleq_{\gitt'}}$ if there is a function $\gm:\gitt\rightarrow\gitt'$ such that:
\begin{enumerate}
\item\label{gemd0} $\gm(\alpha) = \top_{\gitt'}$ if and only if $\alpha  \equiv_{\gitt} \top_{\gitt}$;
\item \label{gemd0,1} $\gm(\alpha \gto_\gitt \beta ) = \gm(\alpha) \gto_{\gitt'} \gm(\beta)$;
\item \label{gemd0,2}$\gm(\alpha \gcap_\gitt \beta ) = \gm(\alpha) \gcap_{\gitt'} \gm(\beta)$;
\item\label{gembd3} $\agi\gleq_\gitt \bgi$ implies $\gm(\agi)\gleq_{\gitt'} \gm(\bgi)$.
\end{enumerate}
\end{definition}

We can naturally extend the notion of Type Assignment Systems to gitt's  following~\cite{DDH26}. 

\begin{definition}[Generalised Type Assignment System]\label{gtas}
The {\em intersection type assignment system} induced by a  gitt $\Pair\gitt {\gleq_\gitt}$ is a formal system deriving judgements of the shape
$\gder\gb M\agi$, where $\agi\in\gitt$ and a {\em basis} $\gb$ is a finite mapping from term variables to elements in $\gitt$:
\Cline{\gb::=\emptyset\mid\gb, x:\agi.} 
The axioms and rules of the type system are the following
\Cline{\begin{array}{ccc}
\prooftree
\justifies
\gb,x:\agi\vdash x:\agi\using\rn{(Ax)}
\endprooftree
&\qquad&
\prooftree
\justifies\gb\vdash M:\top_\gitt\using\rn{($\top$)}
\endprooftree
\\[5mm]
\prooftree
\gb, x:\bgi\vdash M:\agi
\justifies 
\gb\vdash\lambda  x. M:\bgi\gto_\gitt\agi
\using\rn{($\gto$I)}
\endprooftree
&&
\prooftree
\gb\vdash M:\bgi\gto_\gitt\agi\quad\gb\vdash N:\bgi
\justifies
\gb\vdash M N:\agi
\using\rn{($\gto$E)}
\endprooftree\\[7mm]
\prooftree
\gb\vdash M:\bgi\quad\gb\vdash M:\agi
\justifies
\gb\vdash M:\bgi\gcap_\gitt\agi
\using\rn{($\gcap$I)}
\endprooftree
&&
\prooftree
\gb\vdash M:\bgi\quad \bgi\gleq_\gitt\agi
\justifies
\gb\vdash M:\agi
\using\rn{($\gleq$)}
\endprooftree
\end{array}}
\end{definition}

It is now natural to extend to gitt's also the notions of filter and filter model, and then it is straightforward to extend all results on itt's in Section~\ref{itfm} also to gitt's.

\medskip

A gitt $\Pair\gitt {\gleq_\gitt}$ is {\em sensible} if any unsolvable term has only types equivalent to $\top_\gitt$. 

\bigskip

The following transfer theorem, will prove very useful in the sequel:

\begin{theorem}[Transfer~{\cite[Theorem 8]{DDH26}}]\label{gembp} Let $\Pair\gitt {\gleq_\gitt}$ be {\em embeddable} in $\Pair{\gitt'} {\gleq_{\gitt'}}$. We get:
\begin{enumerate}
\item\label{gembp1}
if $\Pair{\gitt'} {\gleq_{\gitt'}}$ is sensible, then $\Pair\gitt {\gleq_\gitt}$ is sensible. 
\item\label{gembp2}
if  $\Pair\gitt {\gleq_\gitt}$ is non-sensible, then $\Pair{\gitt'} {\gleq_{\gitt'}}$ is non-sensible.
\end{enumerate}
\end{theorem}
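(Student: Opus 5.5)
The plan is to show that an embedding $\gm$ transports typing derivations: whenever $\gder\gb M\agi$ holds in $\gitt$, the judgement $\gm(\gb)\vdash_{\gitt'}M:\gm(\agi)$ holds in $\gitt'$. Here $\gm(\gb)$ is the basis $\set{x:\gm(\bgi)\mid x:\bgi\in\gb}$. Both parts of the theorem then follow quickly. Part~\ref{gembp2} is just the contrapositive of Part~\ref{gembp1}, so it suffices to prove Part~\ref{gembp1}.

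\emph{Key lemma.} I will prove the derivation-transport statement by induction on the derivation of $\gder\gb M\agi$ in the generalised type assignment system of Definition~\ref{gtas}, checking each rule in turn.
\begin{itemize}
\item Rule (Ax) maps to (Ax) in $\gitt'$.
\item Rule ($\top$) yields $M:\top_{\gitt'}$, and $\gm(\top_\gitt)=\top_{\gitt'}$ by condition~\ref{gemd0} of Definition~\ref{gembd} (since $\top_\gitt\equiv_\gitt\top_\gitt$). So ($\top$) in $\gitt'$ suffices.
\item Rules ($\gto$I) and ($\gto$E) go through by the induction hypothesis together with condition~\ref{gemd0,1}, $\gm(\bgi\gto_\gitt\agi)=\gm(\bgi)\gto_{\gitt'}\gm(\agi)$.
\item Rule ($\gcap$I) uses condition~\ref{gemd0,2}.
\item Rule ($\gleq$) uses monotonicity, condition~\ref{gembd3}.
\end{itemize}
Bases need care: the extension $\gb,x:\bgi$ maps to $\gm(\gb),x:\gm(\bgi)$, which is well formed because $x$ is fresh. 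No further obstacle arises here, since $\gm$ is a homomorphism for exactly the constructors appearing in the rules.

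\emph{Sensibility transfer.} Assume $\gitt'$ is sensible, and let $M$ be unsolvable with $\gder\gb M\agi$. By the lemma, $\gm(\gb)\vdash_{\gitt'}M:\gm(\agi)$. Sensibility of $\gitt'$ gives $\gm(\agi)\equiv_{\gitt'}\top_{\gitt'}$. I now need to conclude $\agi\equiv_\gitt\top_\gitt$, and this is where the ``only if'' direction of condition~\ref{gemd0} is used. That condition is stated with equality $\gm(\agi)=\top_{\gitt'}$. I read elements of a gitt as equivalence classes, as in Proposition~\ref{semilattice}, so that the meet-semilattice is a poset and $\equiv_{\gitt'}$ coincides with equality. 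Under that reading condition~\ref{gemd0} applies directly and gives $\agi\equiv_\gitt\top_\gitt$, so $\gitt$ is sensible.

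The main (modest) obstacle is precisely this last point: the lemma and the use of conditions~\ref{gemd0,1}--\ref{gembd3} are routine, and the only delicate step is reconciling equality with $\equiv_{\gitt'}$ when applying condition~\ref{gemd0}. If elements are not taken up to equivalence, I would argue that condition~\ref{gemd0} is meant modulo $\equiv_{\gitt'}$, as it is for the top element of any preorder. With either reading the argument closes.
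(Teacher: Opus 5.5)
Your proposal is correct and follows the natural route for this result; the paper gives no proof of its own and imports the theorem from its companion reference. You transport derivations along $\gm$ by induction, using conditions 2--4 of Definition~\ref{gembd} and the ``if'' half of condition 1 for rule ($\top$), then pull $\gm(\agi)\equiv_{\gitt'}\top_{\gitt'}$ back to $\agi\equiv_{\gitt}\top_{\gitt}$ with the ``only if'' half, with Part~\ref{gembp2} as the contrapositive. Your handling of equality versus $\equiv_{\gitt'}$ in condition 1 is also right: elements are taken up to equivalence as in Proposition~\ref{semilattice}, and otherwise condition 1 would say nothing about elements merely equivalent to $\top_{\gitt'}$.
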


We will now construe Tait-Girard reducibility candidates as a gitt.
Let $\So$ denote the set of solvable terms.

\begin{definition}
A set $X\subseteq\So$ is {\em saturated} if it is closed under  $\beta$-conversion  and $x\overrightarrow M\in X$ for all $x$, $\overrightarrow M$.
\end{definition}

\noindent
The set of saturated sets is a complete lattice w.r.t. set inclusion, with  $\Bo=\set{M\in \Lambda\mid M\bredstar x\overrightarrow M}$ as bottom  and $\So$ as top.  Notice that $\Bo$ is closed under $\beta$-conversion, since $\beta$-reduction enjoys the Church-Rosser property.  We use $\SAT$ to denote this lattice and  $\Opp\SAT$ to denote $\SAT$  with the reverse order having bottom $\So$ and top $\Bo$. 
Formally

\Cline{\SAT=\Pair{\set{\Bo\subseteq X\subseteq\So\mid X\text{ is saturated}}}{\subseteq}\text{ and  }\Opp\SAT=\Pair{\set{\Bo\subseteq X\subseteq\So\mid X\text{ is saturated}}}{\supseteq}.} 

We define
\Cline{X\Rightarrow Y=\set{M\in\Lambda\mid \forall N\in X~MN\in Y},}
where $X$ and $Y$ range over saturated sets or $\Lambda$. 
It is easy to verify that $Y$ saturated implies $X\Rightarrow Y$ saturated and that $X\Rightarrow \Lambda=\Lambda$. Moreover $X$ and $Y$ saturated imply $X\cap Y$ saturated and $X\cap\Lambda=X$ for all  $X$. 

We define $\SL=\set{\Bo\subseteq X\subseteq\So\mid X\text{ is saturated}}\cup\set{\Lambda}$.

The following result is crucial.

\begin{theorem}\label{msts}
The gitt %$\SL$, namely 
$\Pair{\SL}\subseteq$ with top $\Lambda$, meet $\cap$ and arrow $\Rightarrow$ is sensible.
\end{theorem}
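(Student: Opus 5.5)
We prove a soundness (realisability) lemma for the type assignment system of Definition~\ref{gtas} over $\Pair{\SL}\subseteq$, in the style of Tait--Girard, and read off sensibility from it. First we check that $\SL$ really is a gitt. $\Lambda$ is the top. The meet is $\cap$, since $X\cap\Lambda=X$ and the intersection of two saturated sets is saturated. The arrow $\Rightarrow$ stays inside $\SL$: if $Y=\Lambda$ then $X\Rightarrow Y=\Lambda$. If $Y$ is saturated, we also need $\Bo\subseteq X\Rightarrow Y\subseteq\So$ for $X\in\SL$. The inclusion $\Bo\subseteq X\Rightarrow Y$ holds because $M\bredstar x\overrightarrow M$ gives $MN\bredstar x\overrightarrow M N\in Y$, and $Y$ is closed under conversion. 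For $X\Rightarrow Y\subseteq\So$, take a fresh variable $z\in\Bo\subseteq X$: then $Mz\in Y\subseteq\So$, and $Mz$ solvable implies $M$ solvable. The case $X=\Lambda$ is the same, since $z\in\Lambda$. Non-triviality is clear, because $\So\neq\Lambda$.

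The key lemma reads as follows. Suppose $\sder\gb M X$ with $\gb=x_1:X_1,\ldots,x_n:X_n$, and $N_i\in X_i$ for all $i$. Then $M[x_1:=N_1,\ldots,x_n:=N_n]\in X$. We prove it by induction on the derivation.
\begin{itemize}
\item (Ax) is immediate, and ($\top$) is trivial since the type is $\Lambda$.
\item ($\gcap$I) and ($\gleq$) follow from the meaning of $\cap$ and $\subseteq$.
\item For ($\gto$E), the definition of $\Rightarrow$ gives the result directly.
\item ($\gto$I) is the main step. We must show that $(\la x.M^\ast)N\in\agi$ whenever $N\in\bgi$, given $M^\ast[x:=N]\in\agi$. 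If $\agi=\Lambda$ this is trivial. Otherwise $\agi$ is saturated, hence closed under $\beta$-conversion, and $(\la x.M^\ast)N=_\beta M^\ast[x:=N]$. Here the strong closure under conversion, not mere head-expansion, makes things easy.
\end{itemize}
Barendregt's convention handles the substitution bookkeeping.

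To conclude, let $M$ be unsolvable with $\sder\gb M X$, and instantiate each $x_i$ by itself. This is allowed because $x_i\in\Bo\subseteq X_i$ for every $X_i\in\SL$, including $\Lambda$. The lemma then gives $M\in X$. If $X\neq\Lambda$ then $X\subseteq\So$, so $M$ would be solvable, a contradiction. Hence $X=\Lambda=\top$, and the gitt is sensible.

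The only delicate point I foresee is checking that $\Rightarrow$ lands in $\SL$ when the domain is $\Lambda$ or the codomain is saturated. This is the $z\in\Bo$ argument above, together with the fact that $Mz$ solvable implies $M$ solvable, which holds via head reduction: an infinite head reduction of $M$ yields an infinite head reduction of $Mz$.
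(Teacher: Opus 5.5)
Your proof is correct and is essentially the paper's. It uses the same substitution lemma ($N_i\in X_i$ implies $M[x_i:=N_i]\in X$), proved by induction on derivations, with Rule \rn{($\gto$I)} as the key case handled by closure of saturated sets under $\beta$-conversion; sensibility then follows by instantiating each $x_i$ by itself. Your explicit check that $\Rightarrow$ maps $\SL$ into $\SL$ (via a fresh $z\in\Bo$) and the observation $x_i\in\Bo\subseteq X_i$ spell out steps the paper leaves as ``easy to verify'' or implicit.
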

\begin{proofs}  We show that  $\sder{\gb}{M}{X}$ with $X\subsetneq\Lambda$ implies that $M\in X$ hence solvable.  Let $\gb=\set{x_i:Y_i\mid 1\leq i\leq n}$  and $N_i \in Y_i$ for $1\leq i\leq n$. By induction on the type derivations we can  prove that 
$\sder{\gb}{M}{X}$ with $X\subsetneq\Lambda$ implies \Cline{M[x_i:=N_i\mid 1\leq i\leq n]\in X.} 
The most interesting case is when the last applied rule is Rule ($\gto$I). In this case $M=\lambda x.M'$ and $X=Y\Rightarrow X'$ and $\sder{\gb,x:Y}{M'}{X'}$. Let $N\in Y$, then by induction hypothesis we have that \Cline{M'[x_i:=N_i\mid 1\leq i\leq n][x:=N]\in X',} which implies $M[x_i:=N_i\mid 1\leq i\leq n]N\in X'$, since saturated sets are closed under $\beta$-conversion. Since $N\in Y$ is arbitrary by definition we conclude \Cline{M[x_i:=N_i\mid 1\leq i\leq n]\in Y\Rightarrow X'=X.\hspace{50pt}\square}
\end{proofs}

\bigskip

The rest of this section is devoted to giving examples of how to apply Theorem~\ref{gembp}, above, to some  itt's and gitt's thereof. It gives a very flexible criterion which cuts both ways, since it can be used both  for {\em reducing} the sensibility of a filter model to that of the embedded filter model,  but also for {\em extending}   the non-sensibility of the embedded filter model to that of the filter model in which it embeds.  
\begin{example}\label{emb}
\hfill
\begin{enumerate}
\item\label{emb2} The itt $\ITT_{BCD}$,  defined in~\cite{BCD83},  is shown  to be sensible by normalising type derivations. The   not  $\beta$-sound itt's defined in~\cite{CDHL84,ABD03} are sensible, since they can be embedded in the itt $\ITT_{BCD}$ by equating the two distinguished constants in such a way that the resulting subtyping amounts precisely to $\leq_{BCD}$. 
\item \label{emb3}
The $\to$-sound itt $\ITT_*$ with the constant $\vartype$ and the axiom $\vartype\leq\vartype\to\vartype$ is defined in~\cite{A08}. Similarly we can define the $\to$-sound itt $\ITT^*$ with the constant $\vartype$ and the axiom  $\vartype\to\vartype\leq\vartype$. 
These itt's are sensible, since they can be appropriately embedded in the sensible itt $\ITT_{CDZ}$ defined in~\cite{CDZ87}. The  
itt $\ITT_{CDZ}$ has only two totally ordered constants. The embedding is realised by interpreting $\vartype$ as the smaller
constant in the former case and as the bigger constant in the latter case. 
\item \label{emb4}
The $\to$-sound itt $\ITT^\flat$ with the constant $\vartype$ and the axiom $\vartype\sim(\vartype\to\vartype)\cap\vartype$ can be again embedded in the sensible itt $\ITT_{CDZ}$ defined in~\ref{emb3}. This embedding is realised by mapping $\vartype$ to the smaller constant.
\end{enumerate}
\end{example}

\noindent
The next proposition makes it possible to build non-sensible filter models starting from sensible ones.

\begin{proposition}
If $\ITT$ is a sensible itt we can define a non-sensible itt $\ITT'$ such that $\ITT$ is embeddable in $\ITT'$.
\end{proposition}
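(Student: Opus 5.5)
The plan is to extend $\ITT$ with one fresh constant that behaves like a reflexive object. Then $\Omega=(\la x.xx)(\la x.xx)$ receives a non-trivial type in the extension, while the inclusion of the old types is the embedding $\gm$ of Definition~\ref{gembd}. The point to keep in mind is that, by Theorem~\ref{gembp}, sensibility only flows from $\ITT'$ to $\ITT$. So a non-sensible $\ITT'$ in which the sensible $\ITT$ embeds is not contradictory; it only requires that the added structure does not collapse anything of $\ITT$.

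Concretely, let $\ITT=\Pair{\cSet}{\leq_\ITT}$ and take a constant $\dd\notin\cSet$. I would define $\ITT'=\Pair{\cSet\cup\set{\dd}}{\leq_{\ITT'}}$, where $\leq_{\ITT'}$ is the least subtyping relation on $\IT_{\cSet\cup\set{\dd}}$ (Definition~\ref{itt's}) containing $\leq_\ITT$ and the axiom $\dd\sim\dd\to\dd$.

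The first step is non-sensibility, which is a short typing computation. From $x:\dd\vdash x:\dd$ and $\dd\leq_{\ITT'}\dd\to\dd$, rule \rn{($\to$E)} gives $x:\dd\vdash_{\ITT'}xx:\dd$. Then $\vdash_{\ITT'}\la x.xx:\dd\to\dd$, hence $\vdash_{\ITT'}\la x.xx:\dd$, and finally $\vdash_{\ITT'}\Omega:\dd$. Since $\Omega$ is unsolvable, $\ITT'$ is non-sensible provided $\dd\nsim_{\ITT'}\U$.

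The second step is to take $\gm$ to be the map on equivalence classes induced by the inclusion $\IT_\cSet\subseteq\IT_{\cSet\cup\set{\dd}}$. Conditions~\ref{gemd0,1}, \ref{gemd0,2} and~\ref{gembd3} of Definition~\ref{gembd} are then immediate. Condition~\ref{gemd0}, and the well-definedness of $\gm$ on classes, reduce to conservativity: for $\type,\types\in\IT_\cSet$, we need $\type\leq_{\ITT'}\types$ iff $\type\leq_\ITT\types$, and also $\dd\nsim_{\ITT'}\U$.

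Conservativity is the main obstacle, because the rules \rn{(Trans)} and \rn{($\to^\sim$)} might in principle route a derivation between $\dd$-free types through types containing $\dd$. My first attempt is a semantic invariant rather than cut elimination. I would partition $\IT_{\cSet\cup\set{\dd}}$ into two classes:
\begin{itemize}
\item the \emph{pure} types, i.e. those in $\IT_\cSet$;
\item the \emph{tainted} types, i.e. those built from $\dd$ with $\to$, together with their intersections with arbitrary types.
\end{itemize}
I would then define an explicit candidate relation $\preceq$ by two clauses:
\begin{itemize}
\item $\type\preceq\types$ iff $\type\leq_\ITT\types$ when both are pure;
\item $\type\preceq\types$ iff the pure part of $\types$ is above the pure part of $\type$ in $\leq_\ITT$ and every tainted component of $\types$ is above some tainted component of $\type$, where $\dd\to\dd$ is identified with $\dd$.
\end{itemize}
The check is that $\preceq$ contains $\leq_\ITT$ and $\dd\sim\dd\to\dd$, and is closed under all the axioms and rules of Definition~\ref{itt's}. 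The delicate case is \rn{($\to^\sim$)} applied to mixed arguments, such as $\dd\to\type$ with $\type$ pure. If that case fails, the fallback is to replace the axiom by the weaker $\dd\leq\dd\to\dd$, as for $\ITT_*$ in Example~\ref{emb}; this still types $\Omega$ as above and is easier to keep conservative.

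Once $\preceq$ is closed, minimality gives $\leq_{\ITT'}\subseteq\preceq$. The invariant then yields both conservativity and $\U\not\preceq\dd$, so $\dd\nsim_{\ITT'}\U$. This establishes the embedding and hence the proposition.
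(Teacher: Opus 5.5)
Your construction is exactly the paper's proof: a fresh constant with $\dd\sim\dd\to\dd$ (the itt of Park's model), the identity as the embedding, and $\Omega:\dd$ derived as you do. The paper simply asserts that the identity is an embedding. You are right that something must be proved there: the ``only if'' half of condition~\ref{gemd0} of Definition~\ref{gembd}, together with $\dd\nsim_{\ITT'}\U$. The other conditions need only $\leq_\ITT\subseteq\leq_{\ITT'}$, which holds by construction.

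Your argument for that step, however, is unfinished, and its fallback is wrong.
\begin{itemize}
\item The relation $\preceq$ is not actually defined. Mixed arrows such as $\dd\to\type$, $\type\to\dd$ or $(\dd\cap\type)\to\types$ belong to neither of your two classes.
\item ``Identifying $\dd\to\dd$ with $\dd$'' must really be a congruence. Rule \rn{($\to^\sim$)} forces, for instance, $(\dd\to\dd)\to\dd\sim\dd$ and $(\dd\cap\type)\to\types\sim((\dd\to\dd)\cap\type)\to\types$. Comparing tainted components would therefore need an equivalence defined simultaneously with $\preceq$, and that is exactly the case you leave open.
\item The fallback axiom $\dd\leq\dd\to\dd$ does not type $\Omega$. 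The step from $\vdash\la x.xx:\dd\to\dd$ to $\vdash\la x.xx:\dd$ uses $\dd\to\dd\leq\dd$. Indeed the paper lists $\ITT_*$, with the single axiom $\vartype\leq\vartype\to\vartype$, as sensible in Example~\ref{emb}(\ref{emb3}). Both halves of the equivalence are needed.
\end{itemize}

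The repair is to keep less information: record only whether $\dd$ occurs, not which tainted components occur. Define $\type^\circ\in\IT_\cSet$ by
\begin{itemize}
\item $\vartype^\circ=\vartype$ for $\vartype\in\cSet$, $\U^\circ=\U$, and $\dd^\circ=\U$;
\item $(\type\cap\types)^\circ=\type^\circ\cap\types^\circ$;
\item $(\type\to\types)^\circ=\type\to\types$ if $\dd$ occurs in neither $\type$ nor $\types$, and $\U$ otherwise.
\end{itemize}
Let $f(\type)=1$ if $\dd$ occurs in $\type$, and $f(\type)=0$ otherwise. Set $\type\,R\,\types$ iff $\type^\circ\leq_\ITT\types^\circ$ and $f(\type)\geq f(\types)$.

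This relation $R$ contains $\leq_\ITT$. It also contains both halves of $\dd\sim\dd\to\dd$, since both sides have image $\U$ and flag $1$. It is closed under every rule of Definition~\ref{itt's}:
\begin{itemize}
\item The lattice rules hold because $(\cdot)^\circ$ commutes with $\cap$ and fixes $\U$, while $f$ of an intersection is the maximum of the flags.
\item For \rn{($\to^\sim$)}, note that $R$-equivalent types have equal flags. So either all four types are $\dd$-free and the rule in $\ITT$ applies, or both arrows have image $\U$ and flag $1$.
\end{itemize}
Since you defined $\leq_{\ITT'}$ as the least relation closed under these rules, $\leq_{\ITT'}\subseteq R$. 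This gives full conservativity on $\IT_\cSet$. It also gives $\U\not\leq_{\ITT'}\dd$, because $0\geq 1$ fails. With this lemma your proof is complete, and it fills in the one step the paper leaves implicit.
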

\begin{proof}
Let $\ITT=\Pair{\cSet_\ITT}{\leq_\ITT}$. We define $\ITT'$ by adding a constant $\vartype\not\in\cSet_\ITT$ and the axiom $\vartype\sim\vartype\to\vartype$. The embedding is the identity. In this way $\ITT'$  is obtained from $\ITT$ essentially by adding the itt generating the filter model isomorphic to Park model~\cite{P76} defined  in~\cite{HR92}.
\end{proof}

This proposition permits us to build neither sensible nor $\beta$-sound filter models  starting from the filter models defined in~\cite{CDHL84,ABD03}.

\section{Morphisms Engineering}\label{gt}
% !TEX root =dgh.tex
 
The power of the Transfer Theorem~\ref{gembp} in proving sensibility of itt's, or more generally gitt's, derives from the existence of appropriate embeddings in $\SAT$. Historically, this was done implicitly by defining appropriate 
%As expected the embedding of itt's in $\SL$ is a powerful tool to show the sensibility of itt's. In this case the embedding function can be given by a
 type interpretations based on Tait-Girard's computability arguments in~\cite{T67,G71,CDZ87,K90,HL99}. 

In this section, we discuss two conditions on itt's, or gitt's derived thereof, which ensure that appropriate morphisms, yielding sensibility, exist. The first condition, Definition~\ref{sitt}, is not effective and it is an almost trivial reformulation of the results in the previous section. Its interest lies in that it can be reversed, Theorem~\ref{sesa},  for a very large class of itt's, including inverse limit models, thus showing that $\SAT$ is somewhat {\em universal}. The second condition, Definition~\ref{consistent-polarity}, is a reformulation of Mendler's condition~\cite{M91} to intersection type theories, and allows for showing constructively the sensibility of many itt's. 

\medskip

Since most of the gitt's in this section  arise  from itt's, we shall reason directly on itt's.

\medskip            

An $\cSet$-environment is a mapping from  a set of type constants, $\cSet$,  into $\SL$. We use $\aenv$ to range over $\cSet$-environments. 

\begin{definition}[Type Interpretation]\label{ti}
The {\em type interpretation} of the set of intersection types $\IT_\cSet$ induced by the $\cSet$-environment $\aenv$, notation $\tint\type\aenv$, is defined by:
\Cline{
\tint\U\aenv=\Lambda \qquad
\tint\vartype\aenv=\aenv(\vartype)\qquad
\tint{\type\to\types}\aenv=\tint{\type}\aenv\Rightarrow\tint{\types}\aenv\qquad
\tint{\type\cap\types}\aenv=\tint{\type}\aenv\cap\tint{\types}\aenv
.
}
\end{definition}

\noindent
Notice that either $\tint\type\aenv$ is a saturated set or $\tint\type\aenv=\Lambda$.

\begin{definition}[Saturation]\label{sitt}
An itt $\ITT$ is {\em saturated} if there is a type interpretation  which gives rise to a morphism in the sense of Definition~\ref{gembd} between the gitt induced by $\ITT$ and $\Pair{\SL}\subseteq$.
\end{definition}

It is easy to verify that all conditions of Definition~\ref{gembd} are satisfied by type interpretations, but for condition~\ref{gemd0} which requires that
Axiom \rn{($\to\U$)} holds in $\ITT$. Then, Theorems~\ref{gembp}(\ref{gembp1}) and~\ref{msts} immediately imply that:

\begin{theorem}\label{mstsi}  A saturated itt is sensible. \end{theorem}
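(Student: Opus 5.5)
The plan is to combine the Transfer Theorem~\ref{gembp}(\ref{gembp1}) with Theorem~\ref{msts}. Let $\ITT=\Pair{\cSet}{\leq_\ITT}$ be saturated. By Definition~\ref{sitt} there is an $\cSet$-environment $\aenv$ such that the map $\gm$ sending the $\sim_\ITT$-class of $\type$ to $\tint\type\aenv$ is a morphism, in the sense of Definition~\ref{gembd}, from the gitt induced by $\ITT$ (Proposition~\ref{semilattice}) to $\Pair{\SL}\subseteq$. Hence $\ITT$ is embeddable in $\Pair{\SL}\subseteq$. Theorem~\ref{msts} says this target is sensible, so Theorem~\ref{gembp}(\ref{gembp1}) gives that the gitt induced by $\ITT$ is sensible.

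The first step is to check that $\gm$ is well defined on equivalence classes, i.e.\ that $\type\sim_\ITT\types$ implies $\tint\type\aenv=\tint\types\aenv$. This follows from condition~\ref{gembd3} applied in both directions, together with antisymmetry of $\subseteq$. So we may freely pass between types and their classes.

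The second step is to transfer the conclusion back from the induced gitt to the itt $\ITT$. Suppose $M$ is unsolvable and $\der\Gamma M\type$. Replacing each type by its class turns the derivation rule by rule into a derivation in the system of Definition~\ref{gtas}:
\begin{itemize}
\item (Ax), ($\to$I), ($\to$E) and ($\cap$I) map directly to their generalised counterparts, since $\to$ and $\cap$ on classes are exactly the arrow and meet of the induced gitt;
\item ($\U$) maps to ($\top$), because the class of $\U$ is the top;
\item ($\leq$) maps to ($\gleq$).
\end{itemize}
Sensibility of the induced gitt then forces the class of $\type$ to be the top, i.e.\ $\type\sim_\ITT\U$. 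This is exactly Definition~\ref{def:sensible}.

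I expect no real obstacle, since the argument only chains earlier results. The one delicate point is condition~\ref{gemd0} of Definition~\ref{gembd}, which requires $\tint\type\aenv=\Lambda$ exactly when $\type\sim_\ITT\U$. This condition is part of the hypothesis that the type interpretation is a morphism, so it is assumed here, not proved. It is precisely what makes the transfer meaningful: the final inference needs $\tint\type\aenv=\Lambda$ to imply $\type\sim_\ITT\U$. The remaining conditions \ref{gemd0,1} and \ref{gemd0,2} of Definition~\ref{gembd} hold by construction of the type interpretation (Definition~\ref{ti}).
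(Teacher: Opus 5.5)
Your proposal is correct and matches the paper's argument, which obtains the theorem immediately by combining Theorem~\ref{msts} with Theorem~\ref{gembp}(\ref{gembp1}), applied to the morphism supplied by Definition~\ref{sitt}. Your extra steps fill in details the paper leaves implicit: that the type interpretation is well defined on $\sim_\ITT$-classes, and the rule-by-rule translation of $\vdash_\ITT$-derivations into derivations for the induced gitt.
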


\begin{figure}[t!]

%\small
\Cline{\begin{array}{lll}
\tint{\types\to\type}{\widehat\aenv}&=&\tint{\types}{\widehat\aenv}\Rightarrow\tint{\type}{\widehat\aenv}	%\\
%&&&
\hfill \text{by Definition~\ref{ti}}\\
&=&\set{M\mid \forall N\in\tint{\types}{\widehat\aenv}~~MN \in\tint{\type}{\widehat\aenv}}%\\
%&&
\hfill \text{by Definition of $\Rightarrow$} \\
&=&\set{M\mid \forall N ~~\exists \Gamma~~\der \Gamma N \types~~\der \Gamma {MN} \type}%\\
%&&
\hfill  \text{by induction}  \\
&=&\set{M\mid \exists \Gamma~~\der {\Gamma,x:B} {Mx} \type}%\\
%&&
\hfill \text{where $x$ is fresh by Lemma~\ref{aux}}\\
&=&\set{M\mid \exists \Gamma~~\der {\Gamma} {M} {\types_i\to\type_i}~~\der{x:\types} x{\types_i}~~ \forall i\in\pI~~
\bigcap_{i\in\pI}\type_i\leq_\ITT\type}\\
&&\hfill \text{by Lemma~\ref{il}(\ref{il2})}\\
&=&\set{M\mid \exists \Gamma~~\der {\Gamma} {M} {\types_i\to\type_i}~~\types\leq_\ITT\types_i~~ \forall i\in\pI~~
\bigcap_{i\in\pI}\type_i\leq_\ITT\type}\\
&&\hfill \text{by Lemma~\ref{il}(\ref{il1})}\\
&=&\set{M\mid \exists \Gamma~~\der {\Gamma} {M} {\types\to\type_i}~~\forall i\in\pI~~
\bigcap_{i\in\pI}\type_i\leq_\ITT\type}\\
&&\hfill \text{by Rule \rn{($\leq$)} using Rule \rn{($\to$)}}\\
&=&\set{M\mid \exists \Gamma~~\der {\Gamma} {M} {\bigcap_{i\in\pI}(\types\to\type_i)}~~
\bigcap_{i\in\pI}\type_i\leq_\ITT\type}%\\
%&&
\hfill \qquad\qquad \text{ by Rule \rn{($\cap$I)}}\\
&=&\set{M\mid \exists \Gamma~~\der {\Gamma} {M} {\types\to\bigcap_{i\in\pI}\type_i}~~
\bigcap_{i\in\pI}\type_i\leq_\ITT\type}\\
&&\hfill \text{by Rule \rn{($\leq$)} using Axiom \rn{($\to\cap$)}}\\
&=&\set{M\mid \exists \Gamma~~\der {\Gamma} {M} {\types\to\type}}%\\
%&&
\hfill \text{by Rule \rn{($\leq$)} using Rule \rn{($\to$)}.}\end{array}}

\medskip

\Cline{
\begin{array}{lll}
\tint{\types\to\U}{\widehat\aenv}&=&\tint{\types}{\widehat\aenv}\Rightarrow\tint{\U}{\widehat\aenv}%\\
%&&
\hfill  \text{by Definition~\ref{ti}} \\
&=&\set{M\mid \forall N\in\tint{\types}{\widehat\aenv}~~MN \in\tint{\U}{\widehat\aenv}}%\\
%&&
\hfill  \qquad\qquad\text{ by Definition of $\Rightarrow$}\\
&=&\set{M\mid \forall N ~~ \exists\Gamma~~\der \Gamma N \types~~\der \Gamma {MN} \U}%\\
%&&
\hfill  \text{by induction} \\
&=&\Lambda%\\
%&&
\hfill \text{by Rule \rn{$(\U)$}}\\
&=&\set{M\mid \ \der {} {M} \U}%\\
%&&
\hfill \text{by Rule \rn{$(\U)$}}\\
&=&\set{M\mid \ \der {} {M} \types\to\U}%\\
%&&
\hfill \text{by Rule \rn{($\leq$)} using Axiom \rn{($\to\U$)}}.
\end{array}}
\caption{Proof of Theorem~\ref{sesa}.}\label{prova}
\end{figure}

For $\to$-sound itt's, Theorem~\ref{mstsi} can be reversed. For this we  first need an easy auxiliary lemma.

\begin{lemma}\label{aux} If $\der{\Gamma,x:\types}{Mx}\type$ where $x$ does not occur in $M$ and $\der{\Gamma'}N\types$, then 
$\der{\Gamma''}{MN}\type$ for some $\Gamma''$.
\end{lemma}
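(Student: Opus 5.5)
The idea is to substitute $N$ for $x$ in the given derivation, after merging the two bases so that the substitution lemma applies. The statement places no assumption on $\ITT$, so the argument must avoid Subject Reduction and rely only on the generic structure of the type assignment system.

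\textbf{Step 1: merge the bases.} Let $\Gamma''$ be the basis with domain $\mathrm{dom}(\Gamma)\cup\mathrm{dom}(\Gamma')$, where we may assume $x\notin\mathrm{dom}(\Gamma')$ by Barendregt's convention. For a variable $y$ in both domains, set $\Gamma''(y)=\Gamma(y)\cap\Gamma'(y)$; otherwise take whichever type is defined. Since $\Gamma''(y)\leq_\ITT\Gamma(y)$ and $\Gamma''(y)\leq_\ITT\Gamma'(y)$ by (IncL)/(IncR), rules ($\leq$-L) and (Weakening) give
\[\der{\Gamma'',x:\types}{Mx}{\type}\qquad\text{and}\qquad\der{\Gamma''}{N}{\types}.\]

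\textbf{Step 2: prove a substitution lemma.} The claim is: if $\der{\Delta,x:\types}{P}{\typeC}$ and $\der{\Delta}{N}{\types}$, then $\der{\Delta}{P[x:=N]}{\typeC}$. The proof is by induction on the derivation of $\der{\Delta,x:\types}{P}{\typeC}$.
\begin{itemize}
\item In the (Ax) case on $x$, we have $\typeC=\types$, so the second hypothesis gives the result directly.
\item The (Ax) case on another variable, and the ($\U$) case, are immediate.
\item The ($\to$E), ($\cap$I) and ($\leq$) cases follow directly from the induction hypothesis.
\item In the ($\to$I) case, $P=\la y.P'$ with $y$ fresh. We weaken $\der{\Delta}{N}{\types}$ to $\der{\Delta,y:\typeD}{N}{\types}$ and then apply the induction hypothesis, using that bases are unordered maps so that $x$ and $y$ may be permuted.
\end{itemize}
No property of $\leq_\ITT$ beyond the rules of Definition~\ref{tas} is used here.

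\textbf{Step 3: conclude.} Applying Step 2 to $P=Mx$ with $\Delta=\Gamma''$ gives $\der{\Gamma''}{(Mx)[x:=N]}{\type}$. Because $x$ does not occur in $M$, $(Mx)[x:=N]=MN$, which is the required judgement.

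The main obstacle is the bookkeeping around variable clashes: $\Gamma$ and $\Gamma'$ may share variables, which is handled by the intersection merge in Step 1, and bound variables in the ($\to$I) case are handled by Barendregt's convention.
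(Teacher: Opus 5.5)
Your proof is correct and is essentially the paper's argument. The paper builds the derivation of $MN$ by replacing each axiom $\hat\Gamma,x:\types\vdash x:\types$ in the derivation of $Mx$ with a derivation of $N:\types$ from the intersection-merged basis $\hat\Gamma\Cup\Gamma'$, and your Step 1 (merging bases by $\cap$) and Step 2 (substitution lemma by induction on derivations) make exactly this precise, including the adjustment of the other leaves and rules that the paper's sketch leaves implicit. One small point: $x\notin\mathrm{dom}(\Gamma')$ does not follow from Barendregt's convention, which concerns bound variables. It is justified instead by first renaming $x$ to a variable fresh for $N$ and $\Gamma'$, which is harmless because $x$ does not occur in $M$.
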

\begin{proof} Define 
\Cline{\Gamma_1\Cup\Gamma_2=\set{y:\typeC_1\cap\typeC_2\mid y:\typeC_1\in \Gamma_1~~ y:\typeC_2\in \Gamma_2}\cup\set{y:\typeC_1\mid y:\typeC_1\in \Gamma_1 ~~ y\not\in\Gamma_2}\cup\set{y:\typeC_2\mid y:\typeC_2\in \Gamma_2 ~~ y\not\in\Gamma_1}. }
We can build a derivation of $\der{\Gamma''}{MN}\type$ just by replacing the axioms $\der{\hat\Gamma,x:\types}{x}\types$ with derivations of $\der{\hat\Gamma\Cup\Gamma'}N\types$ in a derivation of $\der{\Gamma,x:\types}{Mx}\type$. 
\end{proof}

\begin{theorem}\label{sesa}
Each $\to$-sound and sensible itt is saturated. 
\end{theorem}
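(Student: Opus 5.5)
We take the canonical $\cSet$-environment $\widehat\aenv$ with $\widehat\aenv(\vartype)=\set{M\mid\exists\Gamma~\der\Gamma M\vartype}$ for each $\vartype\in\cSet$, and prove by structural induction on $\type$ the characterisation
\Cline{\tint\type{\widehat\aenv}=\set{M\mid\exists\Gamma~~\der\Gamma M\type}.}
This is the computation summarised in Figure~\ref{prova}. The key point is that, since $\ITT$ is sensible, every set $\set{M\mid\exists\Gamma~\der\Gamma M\type}$ with $\type\nsim_\ITT\U$ consists of solvable terms. Such a set contains every $x\overrightarrow M$ (type $x$ with $\type$ in the basis and use Rule \rn{($\U\leq$)}-type reasoning only when needed; for $\U$-free targets an arrow basis type $\U\to\cdots\to\U\to\type$ suffices). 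It is closed under $\beta$-conversion by Theorems~\ref{se} and~\ref{sr}, since $\to$-soundness gives $\beta$-soundness and hence Subject Reduction. Thus $\widehat\aenv$ genuinely maps into $\SL$. If $\type\sim_\ITT\U$ the set is $\Lambda$.

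The induction then runs as follows. The cases $\U$ and constants are immediate, and intersection follows from Rule \rn{($\cap$I)} and Rule \rn{($\leq$)}, merging bases with the $\Cup$ operation of Lemma~\ref{aux}. The arrow case $\types\to\type$ is the chain in Figure~\ref{prova}. Unfolding $\Rightarrow$ and using the induction hypothesis for $\types$ and $\type$ gives $\forall N~(\exists\Gamma~\der\Gamma N\types)\Rightarrow\exists\Gamma'~\der{\Gamma'}{MN}\type$.

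Instantiating with a fresh variable $N=x$ typed by $x:\types$ yields $\der{\Gamma,x:\types}{Mx}\type$; conversely Lemma~\ref{aux} recovers the general statement. From $\der{\Gamma,x:\types}{Mx}\type$ with $\type\nsim\U$ we apply the Inversion Lemma~\ref{il}(\ref{il2}) and (\ref{il1}) to get types $\types\leq\types_i$ with $\der\Gamma M{\types_i\to\type_i}$ and $\bigcap\type_i\leq\type$ (the variable $x$ cannot occur in $\Gamma$-part used by $M$, so it may be dropped). Rule \rn{($\to$)}, Rule \rn{($\cap$I)} and Axiom \rn{($\to\cap$)} then give $\der\Gamma M{\types\to\type}$. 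The reverse inclusion is simply Rule \rn{($\to$E)}. The subcase $\type\sim\U$ uses Axiom \rn{($\to\U$)}, giving $\Lambda$ on both sides.

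With the characterisation in hand, we check the conditions of Definition~\ref{gembd} for the map $\type\mapsto\tint\type{\widehat\aenv}$. Compatibility with $\to$ and $\cap$ holds by definition, and it is well defined on $\sim_\ITT$-classes because the characterisation is invariant under Rule \rn{($\leq$)}. Monotonicity follows from Rule \rn{($\leq$)}. For condition~\ref{gemd0}, the implication $\type\sim\U\Rightarrow\tint\type{\widehat\aenv}=\Lambda$ is clear. Conversely, if $\tint\type{\widehat\aenv}=\Lambda$ then an unsolvable term such as $\Omega$ has type $\type$, so $\type\sim\U$ by sensibility. This is precisely where sensibility is used, and the step showing that $\widehat\aenv(\vartype)\in\SL$ (solvability and saturation) is the main obstacle. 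The containment of all $x\overrightarrow M$ is delicate for constants, and we expect to handle it by choosing the basis type of $x$ as $\U\to\cdots\to\U\to\vartype$, which is legitimate.
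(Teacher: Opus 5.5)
You follow exactly the paper's route: the canonical environment $\widehat\aenv$, the characterisation $\tint\type{\widehat\aenv}=\set{M\mid\exists\Gamma~\der\Gamma M\type}$ by induction, and the chain of Figure~\ref{prova}. However, the decisive step of the arrow case fails, and it is the same step as the line of Figure~\ref{prova} justified by Lemma~\ref{aux}. From $M\in\tint\types{\widehat\aenv}\Rightarrow\tint\type{\widehat\aenv}$ and $x\in\tint\types{\widehat\aenv}$ you only get $\der{\Gamma'}{Mx}{\type}$ for \emph{some} $\Gamma'$, and nothing forces $\Gamma'(x)=\types$. The bases are quantified independently in $\tint\types{\widehat\aenv}$ and in $\tint\type{\widehat\aenv}$, so Lemma~\ref{aux} yields only $\set{M\mid\exists\Gamma~\der\Gamma M{\types\to\type}}\subseteq\tint{\types\to\type}{\widehat\aenv}$. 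The converse is false already for the $\to$-sound itt with two constants $\vartype_1,\vartype_2$ and no other axioms, which is sensible since sending both constants to $\So$ gives a morphism into $\SL$. Swapping $\vartype_1$ and $\vartype_2$ is an automorphism of $\leq_\ITT$, so every term typable with $\vartype_1$ is typable with $\vartype_2$ in the swapped basis. Hence $\mathbf{I}\in\tint{\vartype_1}{\widehat\aenv}\Rightarrow\tint{\vartype_2}{\widehat\aenv}$; concretely, your instantiation produces $\der{x:\vartype_2}{\mathbf{I}x}{\vartype_2}$, not a basis with $x:\vartype_1$. Yet $\mathbf{I}$ has no type $\vartype_1\to\vartype_2$ in any basis. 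The type interpretation $\vartype_1\mapsto\So$, $\vartype_2\mapsto\Bo$ is sound, and $\mathbf{I}\notin\So\Rightarrow\Bo$ because $\mathbf{I}\mathbf{I}\notin\Bo$. Your proofs of well-definedness, of condition~\ref{gemd0} and of monotonicity (condition~\ref{gembd3} of Definition~\ref{gembd}, e.g.\ for an axiom $\vartype_0\sim\vartype_1\to\vartype_2$) all rest on this characterisation, so the argument does not establish the theorem. You would need a device that pins down the type of the fresh variable, for instance a fixed basis giving every type to infinitely many variables. Since saturated sets must contain every $x\overrightarrow M$, that means redesigning $\widehat\aenv$ itself.

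A second, independent problem concerns $\widehat\aenv(\vartype)\in\SL$. You are right that this must be checked, and Figure~\ref{prova} is silent on it. The $x\overrightarrow M$ part (basis type $\U\to\cdots\to\U\to\vartype$; no appeal to \rn{($\U\leq$)} is needed) and the solvability part are fine. But ``$\to$-soundness gives $\beta$-soundness'' is false: Definition~\ref{lgs} only requires \emph{at least} those axioms and rules, and the paper verifies $\beta$-soundness separately for its $\to$-sound examples. For instance, add to a $\to$-sound theory the axioms $\vartype_0\to\vartype_0\leq\vartype_1\to\vartype_0$ and $\vartype_2\to\vartype_2\leq\vartype_1$. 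Then $\der{}{(\la x.x)(\la z.z)}{\vartype_0}$, but $\la z.z$ has no type $\vartype_0$, because no intersection of arrow types lies below $\vartype_0$. To see this, interpret types as subsets of $\set{0,1}$ with $\vartype_0\mapsto\set0$, $\vartype_1,\vartype_2\mapsto\set{0,1}$, and $\types\to\type\mapsto\set{0,1}$ if $\type\mapsto\set{0,1}$, $\set1$ otherwise. So $\widehat\aenv(\vartype_0)$ is not closed under $\beta$-reduction, even though this itt is still sensible (send all constants to $\So$). Closure under reduction therefore needs Subject Reduction as an extra hypothesis, e.g.\ $\beta$-soundness.
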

 \begin{proof} Let  $\ITT=\Pair{\cSet}{\leq_\ITT}$ be an $\to$-sound and sensible itt.  Define the type interpretation \Cline{\widehat\aenv(\vartype)=\set{M\mid \exists \Gamma~~\der \Gamma M \vartype}.}
It is enough to show now that  $\tint\type{\widehat\aenv}=\set{M\mid \exists \Gamma~~\der \Gamma M \type},$  since all conditions of Definition~\ref{gembd} hold, and in particular  $\type\leq_\ITT\types$ implies \Cline{\set{M\mid \exists \Gamma~~\der \Gamma M \type}\subseteq\set{M\mid \exists \Gamma~~\der \Gamma M \types}.} The proof is
 by induction on the definition of type interpretation. The only two interesting cases are proved in Figure~\ref{prova}, where we assume  $\type\nsim_\ITT\U$. 
\end{proof}

Notice that Theorem~\ref{sesa} does not provide an effective characterisation of sensibility for $\to$-sound itt's, since the definition of $\cSet$-environment is not constructive {\em per se}. 

\medskip

The rest of  this section is devoted to showing that a special class of itt's, satisfying the {\em positive polarity} condition in Definition~\ref{consistent-polarity}, is sensible. 
In any case, this class of itt's, which we call {\em natural}, includes essentially all sensible itt's ever used explicitly in the literature. 

\begin{definition}[Natural Itt's]
An $\ITT=\Pair{\set{\vartype_i}_{i\in \pI}}{\leq_\ITT}$ is {\em natural} if  $\leq_\ITT$ satisfies  axiom \rn{($\to\U$)} in Figure~\ref{fig:axRules}, is determined by a set of axioms of a very special form, namely $\Eqt=\{\vartype_i\sim\type_i\}_{i\in\pI}$, and possibly some other  axioms and rules in Figure~\ref{fig:axRules}. Moreover we assume that each type constant occurs exactly once on the left hand side of an axiom in $\Eqt$, possibly vacuously as an identity.  The set $\Eqt$ is the {\em characteristic set} of $\ITT$. 
\end{definition}

%We can also assume that $\Eqt$ is closed, since we  can add the axion $\vartype\sim\vartype$ for the $\vartype$'s that that are not defined in $\Eqt$.

We are now in the position of giving the following crucial definition.

\begin{definition}[Positive Polarity Condition]\label{consistent-polarity}
A natural $\ITT=\Pair{\set{\vartype_i}_{i\in \pI}}{\leq_\ITT}$ satisfies the {\em positive polarity} condition if for all equations of the form $\vartype\sim \type$ derivable from the axioms/rules defining  ${\leq_\ITT}$, from $\OkPos(\type)$ we cannot derive $\OkNeg(\vartype)$ by
applying the following rules: \Cline{
\begin{array}{cccccccc} 
\prooftree
\OkPos(\type\to \types)
\justifies 
\OkNeg(\type)   
\endprooftree 
&\quad& 
\prooftree
\OkPos(\type\to \types)
\justifies 
\OkPos(\types)   
\endprooftree 
&\quad&   
\prooftree
\OkNeg(\type\to \types) 
\justifies 
\OkPos(\type)
\endprooftree
&\quad&   
\prooftree
\OkNeg(\type\to \types) 
\justifies 
\OkNeg(\types)
\endprooftree
\\ [20pt]
\prooftree
\OkPos(\type\cap \types)
\justifies 
\OkPos(\type)
\endprooftree  
&\quad&
\prooftree
\OkPos(\type\cap \types)
\justifies 
\OkPos(\types)
\endprooftree  
&\quad&
\prooftree
\OkNeg(\type\cap \types)
\justifies 
\OkNeg(\type)
\endprooftree 
&\quad&
\prooftree
\OkNeg(\type\cap \types)
\justifies 
\OkNeg(\types)
\endprooftree 
\end{array}
}
\end{definition}

\noindent
It is easy to check that this condition essentially amounts to the fact that if $\vartype\sim_{\ITT}\type$, the constant $\vartype$ does not occur in $\type$ nested inside an odd number of arrows.

\begin{example}
Let $\ITT^\#=\Pair{\set{\vartype_0,\vartype_1,\vartype_2}}{\leq_{\ITT^\#}}$, where $\leq_{\ITT^\#}$ has the only axioms $\vartype_0\sim\vartype_1\to\vartype_2$ and $\vartype_1\sim\vartype_0$. By Rule \rn{($\to\sim$)}  we derive  $\vartype_0\sim\vartype_0\to\vartype_2$ and $\OkPos(\vartype_0\to\vartype_2)$ implies $\OkNeg(\vartype_0)$. Therefore $\ITT^\#$ does not satisfy the positive polarity condition. 
\end{example}

From now on until the end of the section, unless otherwise stated, we will assume that itt's are natural and satisfy the positive polarity condition in Definition~\ref{consistent-polarity}.  Moreover for simplicity,  we consider only characteristic sets 
in which the axioms are of the following three forms: $\vartype\sim\vartype$, or  $\vartype\sim\vartype'\to\vartype''$, or $\vartype\sim\vartype'\cap\vartype''$. In fact we can always transform sets of axioms in this form by removing renamings
and adding new constants and axioms to simplify the right-hand-side of the original axioms. 

To prove that  $\ITT=\Pair{\set{\vartype_i}_{i\in \pI}}{\leq_\ITT}$ is saturated  we have to find an  $\cSet$-environment $\aenv$ which 
%maps associating elements of $\SAT$ to the constants in $\set{\vartype_i}_{i\in\pI}$ 
induces a morphism. Since $\Rightarrow$ on saturated sets is contra-variant on the domain and covariant on the co-domain, the set condition is harmless since it only allows for less set inclusions then those which hold in every type interpretation. 
Moreover we have the following proposition. 
 
\begin{proposition}
Every type interpretation satisfies the axioms and rules in Figure~\ref{fig:axRules}.
\end{proposition}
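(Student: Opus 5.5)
The claim is that every type interpretation $\tint{\cdot}\aenv$ validates the axioms and rules of Figure~\ref{fig:axRules}. Validity means: for any $\cSet$-environment $\aenv$, the set-theoretic counterpart of each axiom holds in $\SL$, and whenever the premises of a rule hold as inclusions between interpretations, so does the conclusion. By Definition~\ref{ti} the interpretation is compositional, so each case reduces to a property of $\Rightarrow$ and $\cap$ on $\SL$. I treat the four items in turn, writing $X=\tint\types\aenv$, $Y=\tint\type\aenv$, $Y'=\tint{\type'}\aenv$, and so on.

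\textbf{Axiom \rn{($\to\U$)}.} We need $\Lambda = X\Rightarrow\Lambda$. This was already observed after the definition of $\Rightarrow$: for every $M$ and every $N\in X$ we have $MN\in\Lambda$.

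\textbf{Axiom \rn{($\to\cap$)}.} We need $(X\Rightarrow Y)\cap(X\Rightarrow Y') = X\Rightarrow(Y\cap Y')$. This follows by unfolding the definition: $M$ belongs to the left-hand side iff for all $N\in X$ we have both $MN\in Y$ and $MN\in Y'$, which is exactly membership in the right-hand side.

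\textbf{Rule \rn{($\to$)}.} Assume $X'\subseteq X$ and $Y\subseteq Y'$. Take $M\in X\Rightarrow Y$ and $N\in X'$. Then $N\in X$, so $MN\in Y\subseteq Y'$. Hence $X\Rightarrow Y\subseteq X'\Rightarrow Y'$.

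\textbf{Rule \rn{($\U\leq$)}.} Assume $\Lambda\subseteq X\Rightarrow Y$; we must show $Y=\Lambda$. This is the only case needing more than unfolding definitions, and I expect it to be the main obstacle. Suppose instead $Y\neq\Lambda$. Then $Y$ is saturated, so $Y\subseteq\So$. Since $X$ is nonempty (it contains $\Bo$, or is $\Lambda$), pick $N\in X$; a variable $x$ will do. Take $M=\la y.\Omega$ with $\Omega=(\la z.zz)(\la z.zz)$. By hypothesis $M\in X\Rightarrow Y$, so $MN\in Y$. But $MN\bred\Omega$, and $Y$ is closed under $\beta$-conversion, so $\Omega\in Y\subseteq\So$. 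This contradicts the unsolvability of $\Omega$, whose head reduction is infinite. Therefore $Y=\Lambda$.

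This last case is exactly where the construction matters: the elements of $\SL$ other than $\Lambda$ are saturated sets of solvable terms. With that in hand, the proposition follows, since validity of the axioms and rules is preserved under the closure that generates $\leq_\ITT$.
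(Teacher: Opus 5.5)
Your proof is correct and follows essentially the same route as the paper. The paper handles Rule \rn{($\to$)} by the contra/covariance of $\Rightarrow$, and Rule \rn{($\U\leq$)} by the fact that $X\Rightarrow Y=\Lambda$ forces $Y=\Lambda$. The paper only asserts that last fact; your argument supplies it, using $\la y.\Omega$ applied to a variable and the fact that every element of $\SL$ other than $\Lambda$ is a $\beta$-closed subset of $\So$. You also check the two axioms that the paper leaves as routine.
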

\begin{proof}
We only consider two interesting cases. Rule \rn{($\to$)} follows from the contra-variance/covariance of $\Rightarrow$. Rule \rn{($\U\leq$)} follows from the fact that
$X\Rightarrow Y=\Lambda$ implies $Y=\Lambda$.
\end{proof}

The natural idea to find a type interpretation for a natural itt in $\SAT$, would be to define, out of the  characteristic set, a monotone operator and use the fact that $\SAT$ is a complete lattice and hence by Knaster-Tarski's Theorem each monotone operator has a complete lattice of fixed points. But the positive polarity condition, Definition~\ref{consistent-polarity}, yields only an individual constraint on each type constant, which cannot be extended uniformly. Conflicting polarities would naturally arise as in the case of the itt's in the following example.  
\begin{example}\label{ep}
Let $\Eqt=\Eqt'\cup\Eqt''$, where $\Eqt'=\set{\vartype_1\sim \vartype_2\to \vartype_1,\vartype_2\sim \vartype_1\to \vartype_2}$ and $\Eqt''=\set{\vartype_3\sim\vartype_4\cap\vartype_5, \vartype_4\sim\vartype_1\to\vartype_3, \vartype_5\sim\vartype_2\to \vartype_3}$. The axioms in $\Eqt'$ require that $\vartype_1$ and $\vartype_2$ have opposite polarities, while the axioms in $\Eqt''$ require that $\vartype_1$ and $\vartype_2$ have the same polarity. This example will be further discussed in Example~\ref{ex:closure}.
\end{example}

In order to be able to define an appropriate type interpretation we need therefore to  introduce an appropriate order on the constants appearing in the characteristic set, so that they can be progressively dealt with. To this end we need a number of definitions.

\begin{definition}[Completion, Closure, Equivalence Class] \label{equivalenza} Consider an itt $\ITT=\Pair{\set{\vartype_j}_{j\in J}}{\leq_\ITT}$ and a subset $\Eqt=\{\vartype_i\sim\type_i\}_{i\in\pI}$ of its characteristic set.
\begin{enumerate}
\item We say that axiom $\vartype\sim\type$ {\em defines} the constant $\vartype$. Hence the
%\item The 
{\em set of constants} defined in $\Eqt$,  notation $\CA\Eqt$, is $\{\vartype_i\}_{i\in\pI}$.
%\item We define the {\em projection} of the set of axiom $\Eqt$ on the set of constants $\cSet\subseteq\CA\Eqt$, notation $\pr\Eqt\cSet$, by $\set{\vartype\sim\type\in\Eqt\mid\vartype\in\cSet}$.
%\item $\Eqt$ is {\em full} if $\CA\Eqt$ contains all constants which are hereditarily defined using a variable in $\CA\Eqt$.
\item\label{two}
The {\em completion} of a full set of axioms $\Eqt$ is $\Eqt\cup\set{\vartype\sim\vartype\mid \vartype'\sim\type\in \Eqt~\&~ \vartype\text{ occurs in }\type ~\&~ \vartype\not\in\CA\Eqt }$. 
%\item 
A set of axioms which coincides with its completion is {\em complete}.
%\item The {\em projection} of the set of axiom $\Eqt$ on the set of constants $\cSet\subseteq\CA\Eqt$, notation $\pr\Eqt\cSet$, is the set $\set{\vartype\sim\type\in\Eqt\mid\vartype\in\cSet}$.
%\item  Given a set  of constants $\cSet\subseteq\CA\Eqt$, $\cSet$ is {\em closed} for $\Eqt$ if the only constants in $\CA\Eqt$ involved in the axioms defining constants in $\cSet$ belong to $\cSet$.\item  Let $\vartype\in\CA\Eqt$.  
\item Let $\Eqt$ be complete and $c\in\CA\Eqt$. \begin{enumerate}\item  The {\em closure} of $\vartype$ for $\Eqt$, notation $\cl\vartype\Eqt$, is ${\mathcal C}(\Eqt')$, where $\Eqt'$ is the smallest complete subset of $\Eqt$ such that $\vartype\in\CA{\Eqt'}$.
%which contains $\vartype$ and is closed for $\Eqt$.
\item The {\em equivalence class} of $\vartype$ for $\Eqt$, notation $\cls\vartype\Eqt$, is defined by \Cline{\set{\vartype'\in\CA\Eqt\mid \cl\vartype\Eqt=\cl{\vartype'}\Eqt}.}
%\item The {\em elimination} of  an equivalence class from a set of axioms is given by the set $\Eqt\setminus\cls\vartype\Eqt=\set{\vartype'\sim\type\in\Eqt\mid\vartype'\not\in\cls\vartype\Eqt}$.
\end{enumerate}
\end{enumerate}
\end{definition}
\noindent
The reason for the seemingly tautological clause~\ref{two} is to turn $\Eqt$ into a characteristic set.

\begin{example}
Let $\ITT_{\#}=\Pair{\set{\vartype_0,\vartype_1,\vartype_2}}{\leq_{\ITT_{\#}}}$ and $\Eqt_{\#}=\set{\vartype_0\sim\vartype_2\to\vartype_0,\vartype_1\sim\vartype_2\to\vartype_1}$. Then $\CA{\Eqt_{\#}}=\set{\vartype_0,\vartype_1}$ and the completion of 
$\Eqt_{\#}$ is $\set{\vartype_0\sim\vartype_2\to\vartype_0,\vartype_1\sim\vartype_2\to\vartype_1,\vartype_2\sim\vartype_2}$. \end{example}

\noindent
Clearly equivalence classes for a complete $\Eqt$ induce an equivalence relation on constants parameterised on $\Eqt$, namely, $\vartype\equiv_\Eqt\vartype'$ if 
$\cls\vartype\Eqt=\cls{\vartype'}\Eqt$. We can thus define the following relation, which is a well-defined partial order.

\begin{definition}[Partial Order]\label{po} Let $\Eqt$ be complete and $\vartype, \vartype'\in\CA\Eqt$.  The {\em partial order} between equivalence classes for $\Eqt$ is defined by $\cls\vartype\Eqt\leqc\cls{\vartype'}\Eqt$ if $\cl{\vartype'}\Eqt\cap\cls\vartype\Eqt\neq
\emptyset$.
\end{definition}

\begin{example}\label{ex:closure}
Let $\Eqt$, $\Eqt'$ and $\Eqt''$ be as in Example~\ref{ep}. 
%The set of constants $\cSet'=\set{\vartype_1,\vartype_2}$ is closed for $\Eqt$ and so is
%$\cSet=\set{\vartype_1,\vartype_2,\vartype_3,\vartype_4,\vartype_5}$.
Both $\Eqt$ and $\Eqt'$ are complete, while $\Eqt''$ is not. 
 Moreover $\cl{\vartype_1}\Eqt=\cl{\vartype_2}\Eqt=\CA{\Eqt'}$ and
$\cl{\vartype_3}\Eqt=\cl{\vartype_4}\Eqt=\cl{\vartype_5}\Eqt=\CA{\Eqt}$. So the axioms in $\Eqt$ define two equivalence classes: $\cls{\vartype_1}\Eqt=\CA{\Eqt'}$
%\set{\vartype_1,\vartype_2}$
and $\cls{\vartype_3}\Eqt=\set{\vartype_3,\vartype_4,\vartype_5}$, ordered by $\cls{\vartype_1}\Eqt\leqc\cls{\vartype_3}\Eqt$.
\end{example}

\medskip

We are now in the position of proving the main result, Theorem~\ref{mt}, namely that a natural itt whose characteristic set of axioms satisfies the positive polarity condition, in Definition~\ref{consistent-polarity}, cannot type an unsolvable term. We do this in three steps.
\begin{enumerate}
\item We restrict to natural type theories which are {\em finite}. That this kind of {\em compactness} result is enough for dealing even with infinite sets of axioms was first noticed by Mendler~\cite{M91}, since all but a finite number of constants are ever used in any type derivation. %In fact, 
Moreover, if the defining equation of a constant is not used in a derivation where that constant appears, then that constant can be safely taken to be equal just to itself. 
\item We show how to give a type interpretation for a complete set of axioms $\mathcal A$ such that $\CA\Eqt$ consists of a single equivalence class for $\Eqt$, Proposition~\ref{minimal}. 
\item We show how to extend a given type interpretation for a complete set of axioms $\Eqt$ to a  type interpretation for the larger complete set of axioms $\Eqt'$ such that the added constants have all the identity axiom in $\Eqt'$,
Proposition~\ref{inductive step}.
\end{enumerate}

\noindent
Both Propositions~\ref{minimal} and~\ref{inductive step} are proved  exploiting the fact that  complete subsets of $\Eqt$ define appropriate
monotone operators on the complete lattice $\Pi_{i\in\pI}\Sat_i$, where $\Sat_i$ can be either $\SAT$ or $\Opp\SAT$.  Then any fixed point  of these
operators, which we know to exist,  provides the tuple of saturated sets giving rise to the $\zeta_\cSet$-environment which we need. 

In order to define the operators we first need to decorate constants in the axioms $\Eqt=\{\vartype_i\sim\type_i\}_{i\in\pI}$ 
 with a polarity $\pol\in\set{\piu,\meno,\pom}$. 
 %We denote by $\overline\pol$ the opposite polarity to $\pol$, where $\overline\pm=\pm$. 
 The intuition is that the axiom associated to a constant
$\Pos\vartype$ should define an operator that is monotone in  $\SAT$ on the variable corresponding to that constant, and the one
associated to a constant $\Neg\vartype$ should define an operator that is monotone in  $\Opp\SAT$  on the variable corresponding to that constant.
The decoration $\pom$ is used for constants whose axiom is the identity. 
%defines an operator that is monotone for  both $\SAT$ and  $\Opp\SAT$ on the corresponding variable. 
The polarity of constants can be extended in a natural way to all types built using them.

\begin{definition}[Polarity]\label{pol}
The predicates $\OkPos$ and $\OkNeg$ on types with polarised constants are defined by:
\Cline{
\begin{array}{c} 
\OkPos(\Pos\vartype)\quad\OkPos(\PN\vartype)\quad\quad \quad\OkNeg(\Neg\vartype)\quad\OkNeg(\PN\vartype)\\[10pt]
\prooftree
\OkNeg(\type)   \ \ \ \OkPos(\types)
\justifies 
\OkPos(\type\to \types)
\endprooftree 
\quad   
\prooftree
\OkPos(\type)   \ \ \ \OkNeg(\types)  
\justifies 
\OkNeg(\type\to \types)
\endprooftree
\quad \quad \quad 
\prooftree
\OkPos(\type)   \ \ \ \OkPos(\types)
\justifies 
\OkPos(\type\cap \types)
\endprooftree  
\quad
\prooftree
\OkNeg(\type)    \ \ \ \OkNeg(\types) 
\justifies 
\OkNeg(\type\cap \types)
\endprooftree 
\end{array}
}
\end{definition}

\begin{definition}
A decoration of constants, $\set{\Pol{\vartype}{\pol_i}_i}_{i\in\pI}$, {\em agrees} with a set of axioms $\Eqt=\{\vartype_i\sim\type_i\}_{i\in\pI}$ if $\Pos\vartype_i\sim\type_i$ implies $\OkPos(\type_i)$,
$\Neg\vartype_i\sim\type_i$ implies $\OkNeg(\type_i)$ and $\PN\vartype_i\sim\type_i$ implies $\type_i=\vartype_i$.
\end{definition}

Let $\B=\{\tilde\vartype_i\sim\type_i\}_{i\in\pI}$ be a complete set of axioms whose type constants are all in the same equivalence class for $\B$, and let $\set{\Pol{\tilde\vartype}{\pol_i}_i}_{i\in\pI}$ be a decoration of the constants which agrees with  $\B$. Let 
$\leqS$ denote $\subseteq$ for $\SAT$ and $\supseteq$ for $\Opp\SAT$. It is easy to see that there exists a decoration, by the positive polarity condition in Definition~\ref{consistent-polarity}, where moreover no constant is decorated with $\pm$.
Consider the lattice $(\Pi_{i\in\pI}\Sat_i,\leqS)$ where $\Sat_i=\SAT$ if $\pol_i=\piu$  and $\Sat_i=\Opp\SAT$ if $\pol_i=\meno$
and $\leqS$ is the order induced on the cartesian product by the order on its components.
That is  $\tseq{X_i}{i \in I}\leqS\tseq{X'_i}{i \in \pI}$, if, for all $i\in\pI$, $X_i$ and $X'_i$ 
are saturated sets and $X_i\leqS X'_i$. 
Let $\vs X$ range over variables. Define
the operator associated to $\B$,  $\op{\B}:\Pi_{i\in\pI}\Sat_i\to\Pi_{i\in\pI}\Sat_i$, by
\Cline{
\op{\B}(\tseq{\vs X_i}{i \in \pI})=\tseq{{\transl{\type_i}}} {i \in \pI}   
}
where the mapping $~\transl{\_}$ is defined by:
\Cline{
\transl{\type}=
\begin{cases}
\vs X_i& \text{if }\type={\tilde\vartype}^\pol_i\\
\vs X_j\Rightarrow\vs X_k & \text{if }\type={\tilde\vartype}^\pol_j\to {\tilde\vartype}^{\pol'}_k\\
\vs X_j\cap\vs X_k & \text{if }\type={\tilde\vartype}^\pol_j\cap {\tilde\vartype}^{\pol'}_k.
\end{cases}
}

\noindent
Then we can easily prove

\begin{proposition}\label{minimal} Let $\B=\{\tilde\vartype_i\sim\type_i\}_{i\in\pI}$ be a complete set of axioms whose type constants are all in the same equivalence class for $\B$, then the operator $\op{\B}$ defined above is monotone.
\end{proposition}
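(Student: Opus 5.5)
Monotonicity can be checked componentwise, using only the contra/covariance of $\Rightarrow$, the monotonicity of $\cap$, and the agreement of the decoration with $\B$. Suppose $\tseq{\vs X_i}{i\in\pI}\leqS\tseq{\vs X'_i}{i\in\pI}$, that is, $\vs X_i\subseteq\vs X'_i$ when $\pol_i=\piu$ and $\vs X_i\supseteq\vs X'_i$ when $\pol_i=\meno$. We must show, for each $i\in\pI$, that $\transl{\type_i}$ computed on the $\vs X$'s is related to $\transl{\type_i}$ computed on the $\vs X'$'s by $\subseteq$ if $\pol_i=\piu$ and by $\supseteq$ if $\pol_i=\meno$. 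Recall that no constant carries $\pom$, so every component is either $\SAT$ or $\Opp\SAT$. We first check that $\op{\B}$ lands in $\Pi_{i\in\pI}\Sat_i$, i.e.\ that each $\transl{\type_i}$ is a saturated set between $\Bo$ and $\So$.
\begin{itemize}
\item Closure under $\cap$ was observed earlier.
\item For $\vs X_j\Rightarrow\vs X_k$, saturation follows from that of $\vs X_k$, as noted earlier.
\item The inclusion $\Bo\subseteq\vs X_j\Rightarrow\vs X_k$ holds because $x\overrightarrow M\,N\in\Bo\subseteq\vs X_k$.
\item The inclusion $\vs X_j\Rightarrow\vs X_k\subseteq\So$ holds because if $MN$ is solvable for some $N\in\vs X_j$ (and $\vs X_j\neq\emptyset$, since it contains the variables), then $M$ is solvable.
\end{itemize}

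Next we prove a general lemma by induction on types built from decorated constants with $\to$ and $\cap$, where the translation $\transl{\_}$ is extended homomorphically. The lemma states: if $\OkPos(\type)$ then $\transl\type[\vs X]\subseteq\transl\type[\vs X']$, and if $\OkNeg(\type)$ then $\transl\type[\vs X]\supseteq\transl\type[\vs X']$.
\begin{itemize}
\item Base case: a constant $\Pos{\tilde\vartype}_i$ is only $\OkPos$, and the inclusion is exactly the hypothesis; dually for $\Neg{\tilde\vartype}_i$.
\item Arrow case: $\OkPos(\type\to\types)$ comes from $\OkNeg(\type)$ and $\OkPos(\types)$, which give $\transl\type[\vs X]\supseteq\transl\type[\vs X']$ and $\transl\types[\vs X]\subseteq\transl\types[\vs X']$. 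Since $X\Rightarrow Y$ is antitone in $X$ and monotone in $Y$, we get $\transl\type[\vs X]\Rightarrow\transl\types[\vs X]\subseteq\transl\type[\vs X']\Rightarrow\transl\types[\vs X']$. The $\OkNeg$ arrow case is symmetric.
\item Intersection case: immediate from monotonicity of $\cap$ in both arguments.
\end{itemize}
Since the axioms of $\B$ have the three flat shapes, the induction has depth at most one. Finally, agreement gives $\OkPos(\type_i)$ whenever $\pol_i=\piu$ and $\OkNeg(\type_i)$ whenever $\pol_i=\meno$, and the lemma yields the required componentwise inequality, hence monotonicity.

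The main obstacle is not this calculation but the claim, used implicitly in setting up $\op{\B}$, that an agreeing decoration with no $\pom$ exists when $\CA\B$ is a single equivalence class. If this needs justifying, I would choose a root constant $\tilde\vartype$ and decorate it $\piu$. Polarities then propagate along the axioms by the rules of Definition~\ref{consistent-polarity}; since all constants of $\B$ are in one class, every constant is reached. A conflict (some constant forced both $\piu$ and $\meno$) would yield a derivable equation $\vartype\sim\type$ with $\OkPos(\type)\Rightarrow\OkNeg(\vartype)$, contradicting the positive polarity condition. The identity axioms $\vartype\sim\vartype$ in $\B$ are unproblematic, because completeness and single-class membership force such constants to be genuinely defined elsewhere or to be absent.
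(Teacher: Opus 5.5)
Your argument is correct and matches what the paper intends by ``we can easily prove'': agreement of the $\pom$-free decoration with $\B$, the antitonicity of $\Rightarrow$ in its first argument and monotonicity in its second, and the monotonicity of $\cap$ together give componentwise monotonicity of $\op{\B}$. Your additional checks go beyond what the paper writes and are sound: that $\op{\B}$ lands in $\SAT$, and that an agreeing $\pom$-free decoration exists. For the latter, a polarity conflict combined with a path back to the root (available because the class is strongly connected) gives a negative cycle, hence a derivable equation violating the positive polarity condition. One small correction: a singleton class may genuinely carry an identity axiom $\vartype\sim\vartype$, but this is harmless, since any polarity agrees with it.
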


Let  $\B=\{\tilde\vartype_i\sim\type_i\}_{i\in I}\subseteq\Eqt$ and let $\CA\B$ be an equivalence class for $\Eqt$ such that the constants in $\B$ are either defined in $\B$ (\ie\ they belong to $\{\tilde\vartype_i\}_{i\in I}$) or they belong to $\{\tilde\vartype_j\}_{j\in J}$ with $I\cap J=\emptyset$ and we have already a type interpretation in $\SAT$ for them given by $\fp{X}_j$ with $j\in J$. Define ${\B^\Eqt}=\B\cup\{\tilde\vartype_j\sim\tilde\vartype_j\}_{j\in J}$. It is easy to see that, by the positive polarity condition in Definition~\ref{consistent-polarity},  there exists a decoration $\set{\Pol{\tilde\vartype}{\pol_h}_h}_{h\in\pI\cup\pJ}$  of the constants in $\CA{\B^\Eqt}$ which agrees with  ${\B^\Eqt}$, giving the polarity $\pm$ to all constants in $\{\tilde\vartype_j\}_{j\in J}$. %, and let $\leqS$ denote $\subseteq$ for $\SAT$ and $\supseteq$ for $\Opp\SAT$. 
Consider the lattice $(\Pi_{h\in\pI\cup\pJ}\Sat_h,\leqS)$ where $\Sat_h=\SAT$ if $\pol_h=\piu$ or $\pol_h=\pom$ and $\Sat_h=\Opp\SAT$ if $\pol_h=\meno$. Let
%and $\leqS$ is the order induced on the cartesian product by the order on its components.
%That is  
$\tseq{X_h}{h \in I\cup\pJ}\leqS\tseq{X'_h}{h \in \pI\cup\pJ}$ and $\vs X$ be as in previous case.
%, if, for all $i\in\pI$, $X_i$ and $X'_i$ 
%are saturated sets and $X_i\leqS X'_i$. 
%Let $\vs X$ range over variables. 
Define
the operator associated to ${\B^\Eqt}$,  $\op{{\B^\Eqt}}:\Pi_{h\in\pI\cup\pJ}\Sat_h\to\Pi_{h\in\pI\cup\pJ}\Sat_h$, by
\Cline{
\op{{\B^\Eqt}}(\tseq{\vs X_h}{h \in \pI\cup\pJ})=\tseq{{\transl{\type_h}}} {h \in \pI\cup\pJ}   
}
where the mapping $~\transl{\_}$ is defined by $\fp{X}_j$ if $\type={\tilde\vartype}^\pm_j$ and $\fp{X}_j$ is the solution for $\tilde\vartype_j$ with $j\in J$, and as in previous case otherwise. 
%\Cline{
%\transl{\type}=
%\begin{cases}
%\fp{X}_i& \text{if }\type={\tilde\vartype}^\pm_i \text{ and $\fp{X}_i$ is the solution for $\tilde\vartype_i$}\\
%\vs X_i & \text{if }\type={\tilde\vartype}^\pol_i\text{ with } \pol\neq\pm \\
%\vs X_j\Rightarrow\vs X_k & \text{if }\type={\tilde\vartype}^\pol_j\to {\tilde\vartype}^{\pol'}_k\\
%\vs X_j\cap\vs X_k & \text{if }\type={\tilde\vartype}^\pol_j\cap {\tilde\vartype}^{\pol'}_k
%\end{cases}
%}
We easily get

\begin{proposition}\label{inductive step} Let $\B=\{\tilde\vartype_i\sim\type_i\}_{I\in I}\subseteq\Eqt $ and let $\CA\B$ consist of an equivalence class for $\Eqt$ such that all the constants appearing in $\B$ either are in $\CA\B$ or are such that we already have a type interpretation for them. Then the operator $\op{{\B^\Eqt}}$ defined above is monotone.
\end{proposition}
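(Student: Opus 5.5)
Monotonicity of $\op{{\B^\Eqt}}$ is checked componentwise. For every $h\in\pI\cup\pJ$ the map $\tseq{\vs X_{h'}}{h'\in\pI\cup\pJ}\mapsto\transl{\type_h}$ must be monotone from $(\Pi_{h'\in\pI\cup\pJ}\Sat_{h'},\leqS)$ into $\Sat_h$. The argument follows the one for Proposition~\ref{minimal}; the only new ingredient is the frozen constants $\tilde\vartype_j$, $j\in J$, decorated with $\pm$.

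First I would handle the components $h=j\in J$. Here the axiom is the identity $\tilde\vartype_j\sim\tilde\vartype_j$, and $\transl{\type_j}=\fp X_j$ does not depend on the variables. So the component is constant, hence monotone, and $\fp X_j\in\SAT$ because it is an already obtained saturated set. This is consistent with $\Sat_j=\SAT$. The same observation covers every occurrence of a $\tilde\vartype_j$, $j\in J$, inside the right-hand side of an axiom with index in $\pI$: such an occurrence contributes the constant $\fp X_j$. This is why the polarity $\pm$, which satisfies both $\OkPos$ and $\OkNeg$, causes no conflict.

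For $h=i\in\pI$ I would do a case analysis on the three admissible shapes of $\type_i$.
\begin{enumerate}
\item $\type_i=\tilde\vartype^{\pol}_k$. Agreement gives $\pol=\pol_i$ (or $\pol=\pm$, a constant). The component is then a projection from $\Sat_k$ to $\Sat_i$ with identical order, so it is monotone.
\item $\type_i=\tilde\vartype_j\cap\tilde\vartype_k$. Agreement forces both constants to have polarity $\pol_i$ or $\pm$. Intersection of saturated sets is monotone in each argument for $\subseteq$, and therefore also for $\supseteq$.
\item $\type_i=\tilde\vartype_j\to\tilde\vartype_k$. If $\pol_i=\piu$, then $\OkPos$ forces $\tilde\vartype_j$ to be negative (or $\pm$) and $\tilde\vartype_k$ positive (or $\pm$). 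Since $\Rightarrow$ is antitone in its first argument and monotone in its second with respect to $\subseteq$, increasing $\vs X_j$ in $\Opp\SAT$ (i.e. shrinking it) and $\vs X_k$ in $\SAT$ enlarges $\vs X_j\Rightarrow\vs X_k$ in $\SAT$. The case $\pol_i=\meno$ is symmetric.
\end{enumerate}

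Two facts should be recorded along the way so that $\op{{\B^\Eqt}}$ really maps into $\Pi\Sat_h$. First, $X,Y$ saturated implies $X\Rightarrow Y$ and $X\cap Y$ saturated; this was noted before Theorem~\ref{msts}. Second, the result lies between $\Bo$ and $\So$: $\Bo\subseteq X\Rightarrow Y$ because $x\overrightarrow M N\in\Bo\subseteq Y$, and $X\Rightarrow Y\subseteq\So$ because $M\,x\in Y\subseteq\So$ for a fresh variable $x\in\Bo\subseteq X$, which makes $M$ solvable.

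The step I expect to be the real obstacle is the existence of the agreeing decoration that assigns $\pm$ to all of $J$ and no $\pm$ inside $\pI$. The statement takes this for granted via the positive polarity condition. I would justify it by propagating polarities from one chosen constant of the equivalence class $\CA\B$ along the $\OkPos/\OkNeg$ rules. Every constant of the class is reachable, since it lies in the closure of every other. A conflict would yield a derivable equation $\vartype\sim\type$ with $\OkNeg(\vartype)$ derivable from $\OkPos(\type)$, which the positive polarity condition forbids. Constants in $J$ belong to strictly lower classes, so their polarity is never constrained. Once the decoration exists, the case analysis above is routine.
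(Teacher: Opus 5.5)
Your proof is correct and matches the paper's intended argument, which the paper leaves implicit ("We easily get"): a componentwise check using agreement of the decoration, antitonicity/monotonicity of $\Rightarrow$, monotonicity of $\cap$, and the fact that the $\pm$-decorated constants of $J$ contribute nothing variable. Your reading that every occurrence of $\tilde\vartype_j$, $j\in J$, stands for the fixed set $\fp{X}_j$ is needed for the statement to hold: if $\vs X_j$ stayed a live coordinate ordered by $\subseteq$ (as the paper's arrow clause and its example $\vs X_1\Rightarrow\vs X_3$ literally suggest), a negative occurrence would break monotonicity, unless one restricts to the complete sublattice where $\vs X_j=\fp{X}_j$.
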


We can now prove the main result,  which generalises Theorem 18 of~\cite{DDH26}. 
\begin{theorem}\label{mt} %A, possibly infinite,  natural itt 
A natural itt with a possibly infinite characteristic set satisfying the condition of positive polarity, in Definition~\ref{consistent-polarity}, is sensible.
\end{theorem}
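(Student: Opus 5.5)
Suppose, towards a contradiction, that some unsolvable term $M$ is given a type $\type\nsim_\ITT\U$ in a derivation $\der\Gamma M\type$. Such a derivation is a finite object: it uses only finitely many type constants and finitely many instances of the axioms in the characteristic set $\Eqt$. Following Mendler's compactness idea, I consider the finite natural itt $\ITT_0$ whose characteristic set $\Eqt_0$ is obtained as follows. Take the defining axioms of the constants occurring in the derivation, including those in $\Gamma$, in $\type$, and in all subtyping steps. Close this set under the constants occurring in their right-hand sides, after the flattening into the three elementary forms $\vartype\sim\vartype$, $\vartype\sim\vartype'\to\vartype''$ and $\vartype\sim\vartype'\cap\vartype''$. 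If this closure were infinite, I would instead keep only the axioms actually used in the derivation and replace every other constant's axiom by the identity; the resulting set is then completed as in Definition~\ref{equivalenza}. The identity map embeds $\ITT_0$ into $\ITT$. Moreover the derivation is also a derivation in $\ITT_0$, since every subtyping step uses only axioms kept in $\Eqt_0$. Positive polarity is inherited by $\ITT_0$, because every equation $\vartype\sim\type$ derivable in $\ITT_0$ is derivable in $\ITT$. So it suffices to show that every finite such $\ITT_0$ is saturated, and then conclude by Theorem~\ref{mstsi}. One point must be checked: replacing an unused axiom by the identity must not destroy the fact that $\type\nsim_{\ITT_0}\U$. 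This holds because the embedding $\ITT_0\to\ITT$ is monotone and $\type\nsim_\ITT\U$.

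To build the $\cSet$-environment for the finite complete set $\Eqt_0$, I partition $\CA{\Eqt_0}$ into equivalence classes $\cls\vartype{\Eqt_0}$ and order them by $\leqc$ (Definition~\ref{po}). Since $\Eqt_0$ is finite and $\leqc$ is a partial order, I can list the classes in a linear extension, minimal classes first. I then process the classes in this order, by induction.

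For a minimal class $\B$, every constant occurring in $\B$ is defined in $\B$. By positive polarity there is a decoration without $\pm$ that agrees with $\B$, and Proposition~\ref{minimal} makes $\op\B$ monotone on the complete lattice $\Pi_i\Sat_i$. Knaster--Tarski then gives a fixed point $\tseq{\fp X_i}{i\in\pI}$, which I assign to the constants of $\B$. For a later class $\B$, all external constants already have interpretations $\fp X_j$. Proposition~\ref{inductive step} makes $\op{\B^\Eqt}$ monotone, and a fixed point of it extends the environment. Each $\fp X_i$ must lie in $\SL$. This holds because $\SAT$ is closed under $\Rightarrow$ and $\cap$, which are precisely the operations producing the right-hand sides. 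I also need the constants to be interpreted by proper saturated sets rather than by $\Lambda$, so that condition~\ref{gemd0} of Definition~\ref{gembd} holds; since all components range over $\SAT$, the interpretations of the constants are never $\Lambda$.

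Once the environment $\aenv$ is fixed, each axiom $\vartype_i\sim\type_i$ holds as an equality $\tint{\vartype_i}\aenv=\tint{\type_i}\aenv$, by the fixed-point property. The extra rules allowed from Figure~\ref{fig:axRules} are valid in every type interpretation, by the Proposition stated just before Definition~\ref{pol}. Rule $(\to^\sim)$ and the meet-semilattice rules are valid because $\Rightarrow$ and $\cap$ are functions on sets. Hence $\type\leq_{\ITT_0}\types$ implies $\tint\type\aenv\subseteq\tint\types\aenv$, and the morphism conditions follow.

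The remaining condition is $\tint\type\aenv=\Lambda$ iff $\type\sim\U$. The direction from $\type\sim\U$ to $\Lambda$ uses axiom $(\to\U)$ together with $X\Rightarrow\Lambda=\Lambda$. For the converse I argue by structural induction on $\type$, using that constants are never sent to $\Lambda$. An intersection is sent to $\Lambda$ only if both components are. An arrow $X\Rightarrow Y$ equals $\Lambda$ only if $Y=\Lambda$, because $Y\subseteq\So$ excludes the term $\Omega$. In that case the induction hypothesis gives that the codomain of the arrow is $\sim\U$, and $(\to\U)$ then makes the arrow itself $\sim\U$. Thus $\ITT_0$ is saturated, hence sensible by Theorem~\ref{mstsi}. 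This contradicts the derivation typing the unsolvable $M$ with a type not equivalent to $\U$.

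The step I expect to be the main obstacle is the reduction to the finite case. Two things need care there: showing that the derivation remains valid, and showing that $\type\nsim\U$ is preserved, after the unused axioms are trivialised. The delicate point is that the subtyping judgements of $\ITT$ may use arbitrarily long chains through $(\to^\sim)$ and the axioms, so I must argue that each such judgement has a finite derivation touching only finitely many axioms, and take all of them into $\Eqt_0$.
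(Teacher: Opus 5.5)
Your proposal is correct and follows the paper's proof. You restrict, following Mendler's compactness idea, to the finite sub-theory of constants and axioms actually used in the derivation, trivialising the rest to identities; you solve a minimal equivalence class with Proposition~\ref{minimal}, extend class by class with Proposition~\ref{inductive step} via Knaster--Tarski fixed points, and conclude with Theorem~\ref{mstsi}. Your explicit checks spell out details the paper leaves implicit: including constants that occur only in subtyping derivations, verifying soundness of every rule under the interpretation, and checking condition~\ref{gemd0} of Definition~\ref{gembd} via $\Omega$; note only that the argument needs just monotonicity of the inclusion $\ITT_0\to\ITT$, not that it is an embedding in the full sense of Definition~\ref{gembd}.
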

\begin{proof} Consider a finite derivation in a natural  itt $\ITT$. Without loss of generality we can restrict to the finite natural itt $\ITT'$ whose characteristic set involves only the constants actually used in that derivation, possibly assigning the identity to constants whose defining axioms have not been used in the derivation. Now use Proposition~\ref{minimal} for one of the minimal equivalence classes, according to the partial order in Definition~\ref{po} on the constants in $\ITT'$, to derive a first partial type interpretation of the constants. Notice that the set of axioms defining the constants in a minimal equivalence class is complete. Use Proposition~\ref{inductive step} to extend such a type interpretation to all the constants in $\ITT'$  adding incrementally an equivalence class such that the solutions for the constants not belonging to that equivalence class have already be found. 
Since $\ITT'$ is finite, we can always find such an equivalence class, namely one of the minimal classes in the  partial order consisting of the equivalence classes which have not been yet dealt with. Finally, using Theorem~\ref{mstsi} we conclude the proof. 
\end{proof}

We end this section with a few examples. The sensibility of the first theory follows directly by applying Propositions~\ref{minimal} and~\ref{inductive step}. The second example deals with a type theory, which was introduced in~\cite{DGH25}. Its sensibility can be proved either using Theorem~\ref{mt} or even directly taking the fixed points of  a monotone operator defined on countable sequences of $\SAT$'s and  $\SAT^{op}$'s. Finally, the third example deals with a theory whose sensibility, to our present knowledge, can be proved only using Theorem~\ref{mt}, through its finite approximations. This is somewhat puzzling because once we know that the theory is sensible, by Theorem~\ref{sesa}, we can in principle define a type interpretation in $\SAT$. 

\begin{example}\ 
\begin{enumerate}
\item Consider the axioms $\Eqt$ of \refToEx{ep} and let $\cSet=\CA\Eqt$. 
\begin{itemize}
\item We start from $\cSet'=\cls{\vartype_1}\Eqt$, which is  the minimum class of $\Eqt$. %assigning polarity $\piu$ to $\vartype_1$. 
Let $\tseqs{ \fp X_1, \fp X_2} $ be a fixed point of the operator \Cline{\op{\cSet'}:\SAT\varotimes\Opp\SAT\to\SAT\varotimes\Opp\SAT} defined by 
 \Cline{\op{\cSet'}(\tseqs{\vs X_1,\vs X_2})=\tseqs{\vs X_2\Rightarrow \vs X_1,\vs X_1\Rightarrow \vs X_2}.} 
%The defining axioms for $\vartype_1$ is $\vartype_1\sim \vartype_2\to \vartype_1$, so we assign polarity $\meno$ to $\vartype_2$ and (re)assign $\piu$ to $\vartype_1$ . Now we have to analyse the defining axiom for $\vartype_2$ which is $\vartype_2\sim \vartype_1\to \vartype_2$ and (re)assign again polarity $\piu$ to $\vartype_1$ and $\meno$ to $\vartype_2$ and we are done.
\item We then analyse the class $\cls{\vartype_3}{{\Eqt}}$ taking advantage from the solutions for $\vartype_1$, $\vartype_2$ already computed.
We take as $\langle \fp X_1, \fp X_2,\fp X_3, \fp X_4, \fp X_5\rangle $ the fixed point of the operator \Cline{\op{\cSet}:\SAT\varotimes\SAT\varotimes\SAT\varotimes\SAT\varotimes\SAT\to\SAT\varotimes\SAT\varotimes\SAT\varotimes\SAT\varotimes\SAT} defined by 
 \Cline{\op{\cSet}(\tseqs{\vs X_1,\vs X_2,\vs X_3, \vs X_4, \vs X_5})=( \tseqs{\fp X_1, \fp X_2,\vs X_4\cap\vs X_5, \vs X_1\Rightarrow \vs X_3,\vs X_2\Rightarrow \vs X_3}).} 
 \end{itemize}
The sensibility of a natural itt with characteristic set $\Eqt$ can be shown by taking $\aenv(\vartype_i)=\fp X_i$ for  $1\leq i\leq 5$, where $\tseqs{\fp X_1, \fp X_2,\fp X_3, \fp X_4,\fp X_5}$ is a fixed point of $\op{\cSet}$.

\item Consider the axioms $\Eqt_\infty=\set{\vartype\sim\vartype}\cup\set{\vartype_n\sim\vartype_{n+1}\to\vartype\mid n \in \mathtt N}$ and $\cSet_\infty=\CA{\Eqt_\infty}$. The minimal class of $\Eqt_\infty$ is $\cls{\vartype}{{\Eqt_\infty}}$ and we can take for $\zeta_{{\Eqt_\infty}}(\vartype)$ an arbitrary saturated set, for example $\Bo$. But then there is no finite minimal equivalence class from which we can start our procedure for defining a type interpretation in $\SAT$. We could consider the finite approximations of a such a theory, but we can show also that a natural itt with characteristic set $\Eqt_\infty$ is sensible by defining directly the operator \Cline{\op{\cSet_\infty}:\SAT\varotimes(\SAT\varotimes\Opp\SAT)^{\mathbb N}\to\SAT\varotimes(\SAT\varotimes\Opp\SAT)^{\mathbb N}} by 
\Cline{\op{\cSet_\infty}(\tseqs{\vs X}\cdot\tseq{\vs X_n}{n\in{\mathbb N}})=\tseqs\Bo\cdot\tseq{\vs X_{n+1}\Rightarrow \vs X}{n\in{\mathbb N}}.}
 A fixed point of $\op{\cSet_\infty}$ exists, since it  is monotone. Let $\tseqs\Bo\cdot\langle {\fp X}_n \mid n\in{\mathbb N }\rangle $ be such a fixed point, then $\zeta_{{\cSet_\infty}}(\vartype)=\Bo$ and  $\zeta_{{\cSet_\infty}}(\vartype_n)=X_n$ for $n\in{\mathbb N}$ is the $\cSet_\infty$-environment we are looking for.

 \item Consider the itt given by the set of axioms $\set{\vartype_{0,n}\sim \vartype_{1,n} \to \vartype_{0,n+1},
\vartype_{1,n}\sim\vartype_{0,n} \to \vartype_{1,n+1}\mid n\in \mathtt N}$. This theory can be taken to be $\to$-sound and can be proved to be $\beta$-sound. Moreover finite approximations of this theory can be used to show its sensibility using Theorem~\ref{sesa}. We ignore how to define inductively an embedding of this theory in $\SAT$. 
\end{enumerate}
\end{example}

\section{Towards a Complete Characterisation of Sensible Itt's}\label{critical}
% !TEX root =dgh.tex

Mendler in~\cite{M91} studied second order $\lambda$-calculus with minimal and maximal fixed point type equations. He proved that the system is strongly normalising if and only if the fixed point equations satisfy essentially the positive polarity condition in  Definition~\ref{consistent-polarity}. 
Theorems~\ref{mstsi} and~\ref{sesa} are the analogues, albeit not effective, of Mendler's result, for $\rightarrow$-sound intersection type systems and solvable terms. The positive polarity condition on intersection type theories is only a sufficient condition for sensibility. We can indeed build a type interpretation which is, or finitely approximates, an embedding into $\SL$, for a natural  itt's whose characteristic set satisfies the positive polarity condition, but this is not a necessary condition as was the case in~\cite{M91}.  
The intersection operator $\cap$ can, in fact, sterilise the contra-variant behaviour of the  arrow constructor, as we can see in the following examples. All the itt's considered  in these examples are assumed to be $\to$-sound and moreover can be proved to be $\beta$-sound by induction on their subtypings. 
 
\begin{example}[Elimination of negative occurrences]\hfill\label{fine}
\begin{enumerate}
\item
Let $\ITT_2$  be the itt  with constants $\set{\vartype_0,\vartype_1}$  and  axiom
\Cline{
\vartype_0\sim\vartype_0\cap\vartype_1\to\vartype_{0}.} It is immediate to see that the characteristic set of $\ITT_2$ does not satisfy the positive polarity condition, in  Definition~\ref{consistent-polarity}. 

Nevertheless $\ITT_2$ can be shown to be sensible  by embedding it in the   itt $\ITT'_2$ obtained by adding  the axiom  $\vartype_1\leq\vartype_0$, which gives 
\Cline{
\vartype_0\sim_{\ITT_2'}\vartype_1\to\vartype_0
}
 generating  %resulting in 
a sensible filter model by  Theorem~\ref{mt}.  Alternatively, instead of adding the axiom $\vartype_1\leq\vartype_0$ we  can obtain a sensible filter model, again by   Theorem~\ref{mt},   by adding  the 
 axiom
\Cline{
\vartype_1\sim\U \rightarrow\vartype_{0}
}
since this axiom implies $\vartype_1\leq \vartype_0$  by Rule \rn{($\to$)}.  

\noindent
Notice, on the other hand, that if  we add to $\ITT_2$ the axiom $\vartype_0\leq\vartype_1$ we get $\vartype_0\sim_{\ITT_2}\vartype_0\to\vartype_0$. Then the resulting itt is non-sensible by Proposition~\ref{gembp}(\ref{gembp2}), because  the non-sensible itt generating the filter model isomorphic to Park model~\cite{P76} defined  in~\cite{HR92} is embeddable in it. 
\item 
Let $\ITT_3$  be 
the itt with constants $\set{\vartype_0,\vartype_1,\vartype_2}$ together with the axiom 
\Cline{
\vartype_0\sim\vartype_0\cap(\vartype_1\to\vartype_2)\to\vartype_1.
}
The itt 
$\ITT_{CDZ}$ considered in Example~\ref{emb}(\ref{emb3}) has constants $\set{\vartype_3,\vartype_4}$ and the axioms   
\Cline{
 \vartype_3\sim\vartype_4\to\vartype_3 \quad \vartype_4\sim\vartype_3\to\vartype_4\quad\vartype_3\leq\vartype_4.
}
We can show that $\ITT_3$ is sensible by embedding it in $\ITT_{CDZ}$ via the structural extension $\hat\iota$ of $\iota$ defined by:
\Cline{
\iota(\vartype_0)=\vartype_4\quad\iota(\vartype_1)=\vartype_4\quad\iota(\vartype_2)=\vartype_3.
}
 Now, since $\iota(\vartype_0)=\vartype_4$ and 
  \Cline{\hat\iota(\vartype_0\cap(\vartype_1\to\vartype_2))=\iota(\vartype_0)\cap(\iota(\vartype_1)\to\iota(\vartype_2))=\vartype_4\cap(\vartype_4\to\vartype_3)\sim_{\ITT_{CDZ}}\vartype_4\cap\vartype_3\sim_{\ITT_{CDZ}}\vartype_3} 
  which implies
  $\hat\iota(\vartype_0\cap(\vartype_1\to\vartype_2)\to\vartype_1)=\hat\iota(\vartype_0\cap(\vartype_1\to\vartype_2))\to\iota(\vartype_1)\sim_{\ITT_{CDZ}}\vartype_3\to\vartype_4\sim_{\ITT_{CDZ}}\vartype_4$,
  we have as required that  
\Cline{
\iota(\vartype_0)\sim_{\ITT_{CDZ}}\hat\iota(\vartype_0\cap(\vartype_1\to\vartype_2)\to\vartype_1).
}
\end{enumerate}
\end{example}

Achieving an {\em effective} Mendler-like completeness result appears critical even for natural intersection types and solvable terms, since there are cases where the intersection operator does not prevent the contra-variant behaviour of  the arrow constructor  to have the upper hand, as we can see in the following example.
\begin{example}\label{ff}
Let $\ITT_4$  be the  $\to$-sound itt %theory 
 with constants $\set{\vartype_0,\vartype_1,\vartype_2,\vartype_3}$
and %type system induced by the itt 
with the axiom 
\Cline{\vartype_0\sim\vartype_0\cap(\vartype_1\cap(\vartype_1\to\vartype_2)\to\vartype_2)\to\vartype_3.} 
We can type ${\boldsymbol\Omega}_2{\boldsymbol\Omega}_2$ with $\vartype_3$  since 
$\vdash_{\ITT_4}{\boldsymbol\Omega}_2:\vartype_0$ and $\vdash_{\ITT_4}{\boldsymbol\Omega}_2:\vartype_1\cap(\vartype_1\to\vartype_2)\to\vartype_2$.  The
$\beta$-soundness of $\ITT_4$ can be shown by induction on $\leq_{\ITT_4}$ and  hence  $\ITT_4$ generates a filter model. 
\end{example}
\bigskip

\subsection{Theories of Sensible Filter Models}\label{theories}

Models give semantics. But what are semantics? In the philosophical tradition crystallised by Leibniz, ontological entities arise once we can tell them apart. So semantics are essentially congruences. Given that there is a plethora of sensible filter models, we could imagine that these would provide  a corresponding plethora of semantics for $\lambda$-calculus, \ie\ $\lambda$-theories.  Formally a $\lambda$-theory is just a non-trivial congruence over $\lambda$-terms, closed under $\beta$-conversion. But this appears not to be immediately the case. 

All the $\lambda$-theories of sensible filter models which we have considered in this paper appear to equate all $\lambda$-terms which have the same B\"ohm tree, \ie\ their  $\lambda$-theories are at least $\mathcal B$. We refer to~\cite[Chapter 16]{B85} for more details on $\lambda$-theories and B\"ohm trees. This is the case for the filter model isomorphic to Scott's inverse limit model~\cite{CDHL84}, whose theory is the maximal sensible theory ${\mathcal H}^*$~\cite[Definition 16.2.1]{B85}, the filter model over $\ITT_{CDZ}$, defined in Example~\ref{emb}(\ref{emb3}), whose theory is the weaker ${\mathcal H}^+$~\cite[Definition 3.11(iii)]{BM22}, and of course the filter model over $\ITT_{BCD}$, defined in Example~\ref{emb}(\ref{emb2}), whose theory is ${\mathcal B}$~\cite[Definition 16.4.1]{B85}.
Notably in~\cite{DDH26} a sensible filter model which separates an open Barendregt  fixed-point combinator and the Curry fixed-point combinator is discussed. Therefore the theory of this filter model is weaker than $\mathcal B$.

The minimal sensible theory is $\mathcal H$~\cite[Definition 4.1.6(ii)]{B85}. It is an intriguing open problem whether this theory is precisely the theory of some filter model, or whether filter models have hitherto unknown semantical implications. We hope that this paper will stimulate readers to taking up this intriguing open question, which parallels for sensible theories the open question discussed in~\cite{HP09} for general  $\lambda$-theories. 

\bigskip 
 
We conclude this subsection discussing the notion of sensibility. \\
%% Frasi nuove
There are two notions under which a lattice model may be considered sensible. The more general notion requires that all closed unsolvable terms be identified~\cite[Definition 4.1.7(ii)]{B85}, whereas the more rigid notion requires that all unsolvable terms be identified with the bottom element of the model~\cite[Definition 12.1(ii)(3)]{BM22}.
In the present paper, we focused on the more specific notion of sensibility.
%% Fine modifica
%We ignore if they are equivalent. 
 We conjecture that by adding another universal constant, and a trivial rule, we can define  filter models where the unsolvables are not identified with the bottom element. The theory of such models, however, would still be order-sensible in the sense of~\cite[Definition 12.1(ii)(3)]{BM22}. A much more difficult problem would be to find a sensible filter model in the general sense which is non order-sensible. Given the non r.e. nature of unsolvables, this would quite likely  require non finitary  rules in the type assignment system.

 \section{Related Work and Conclusion}\label{rwc}
% !TEX root =dgh.tex

Since the invention  in the late seventies, 
%of intersection type theories in Torino, due chiefly to the first author of the present paper (or rather the discovery according to a different ontological standpoint), 
intersection types have revolutionised the approach to semantics of functional programming languages in multiple ways. Firstly, intersection types have reversed the traditional understanding of the relation of specifications to programs, justifying the correctness-oriented approach to program construction. Namely, we should use the specifications themselves to construct a program which meets them, rather than try to prove that  an existent program is  correct. This has been expressed categorically as a duality, see Abramsky~\cite{A91}, or by means of pointless topology~\cite{SVV96}. Secondly, intersection types have made explicit the connection between static and dynamic semantics, namely, the former semantics provides a finitary approximation of the latter. Thirdly, intersection types have allowed for static specifications of a plethora of interesting classes of $\lambda$-terms ~\cite{DHM05}. But, more generally, intersection types have provided, in the past half century, the paradigm for expressing and studying all sorts of semantics of programming languages ranging from quantitative semantics~\cite{DH93,G94,BKV17,AGK18,C18,AKR24}  to qualitative semantics~\cite{CDHL84,A91}, from games~\cite{HL03,DHL08,DL13} to power series~\cite{GO21}, and for all sorts of domains. 

Among the vast number of presentations  available today of intersection type theories, in this paper we have built upon the recent comprehensive discussion of filter models and unsolvable terms, which appears in~\cite{DGH25}. Actually, the present paper is a counterpart to that paper in that we discuss {\em sub specie typorum intersectionibus}, sensible filter models or, what is its syntactic analogue,  {\em head normalising} terms. 

Intersection type theories are very flexible and hence expressive, but this makes them also rather difficult to classify exhaustively. For instance the nice characterisation given by Mendler~\cite{M91}, of recursive second order type theories which type only  strongly normalising terms, cannot be paralleled in the context of  itt's and head normalising terms. There are plenty of itt's which do not satisfy any straightforward polarity criterion but nonetheless type non-trivially only head normalising terms. In~\cite{DDH26}  we argue that this is the case even for intersection-free axioms, contradicting blatantly the simple minded analogue of Mendler's condition.  {\em E.g.},  a natural 
 theory with the single axiom $\vartype\sim (((\vartype\rightarrow \vartype_0)\rightarrow \vartype_1)\rightarrow \vartype_2$ for $\vartype_0,\vartype_1,\vartype_2$ generic constants, can type  with types not equivalent to $\U$  only head-normalising terms.

In this paper, we construe itt's as special meet-semilattices and show that morphisms  in the  opposite category of meet-semilattices preserve sensibility, see Theorem~\ref{gembp}(\ref{gembp1}). Moreover we show that the meet-semilattice $\SAT$ is {\em universal} in the sense  that an $\to$-sound itt types non-trivially only head-normalising terms if and only if it can be embedded, as a meet-semilattice in it, see Theorems~\ref{mstsi} and~\ref{sesa}.  We provide a number of techniques for putting this result into action and give various examples. An immediate consequence is that sensibility transfers transitively in the op-category. Thus once we have a sensible itt, this  can play the role of  $\SAT$, and sensibility can be easily transferred to all itt's which embed in it.  Lacking suitable sensible itt's,  we need to define a direct morphism between an itt and $\SAT$. This can be achieved for a large class of natural itt's whose characteristic axioms satisfy a positive polarity condition. This condition essentially amounts to the condition introduced by Mendler in~\cite{M91} for second order $\lambda$-calculus. Thus, by repeatedly solving fixed point equations in $\SAT$, which is a complete lattice, we can prove Theorem~\ref{mt}, which amounts to the ``if'' part of Mendler's result. 

%Of course arguments involving indexed reduction can also be used~\cite{L75}. 

Providing a syntactical effective criterion for determining if an itt is sensible does not appear feasible, however, since intersections can produce rather unanticipated consequences, already in natural itt's. See the examples in Example~\ref{fine}. We have not studied itt's whose axioms are not equivalences or both whose sides are types.

In conclusion we have explored what was a ``seasoned" problem area and provided some advancement both in terms of conjectures and in terms of results. 

A last word goes to Stefano Berardi for whose 
birthday we dedicate the present paper.  We reckon Stefano Berardi not only among the most brilliant and deep Italian logicians of his generation, who has passed to-and-fro with breath-taking ability between Logic and Theoretical Computer Science, uncovering profound connections.  But we appreciate also his ethical attitude in current times.  Boldly, he has resisted  the fashion of pursuing quantity rather than quality, and of pursuing citations rather than results. Each of his works is original and extremely valuable in the mosaic produced by  the  noble and humble human endeavour of scientific research, of which he is a champion. Inspired by his attitude towards research we offer him the present paper as a small tile in that grand mosaic and a token of our friendship.

\paragraph{Acknowledgments}
The present version of this paper strongly improved with respect
to the original submission thanks to the careful reports.  The referees
did a great job in pointing out many places where the technical
details appeared without the needed explanations.  The difference
between the two versions are  several illustrating discussions with enlightening examples.

\bibliographystyle{plainurl}
 \bibliography{dgh}

\end{document}